\numberwithin{equation}{section}
	\theoremstyle{plain}
	\newtheorem{theorem}{Theorem}
	\numberwithin{theorem}{section}
	\newtheorem{lemma}[theorem]{Lemma}       	
	\newtheorem{proposition}[theorem]{Proposition}
	\theoremstyle{definition}
	\newtheorem{definition}[theorem]{Definition}
	\newtheorem{remark}[theorem]{Remark}
\newcommand{\<}{\langle}
\renewcommand{\>}{\rangle}
\renewcommand{\(}{\left(}
\renewcommand{\)}{\right)}
\renewcommand{\[}{\left[}
\renewcommand{\]}{\right]}
\newcommand\Eb{\mathds{E}}
\newcommand\Fb{\mathds{F}}
\newcommand\Ib{\mathds{1}}
\newcommand\Pb{\mathds{P}}
\newcommand\Rb{\mathds{R}}
\newcommand\Zb{\mathds{Z}}
\newcommand\Nb{\mathds{N}}
\newcommand\Ac{\mathscr{A}}
\newcommand\Bc{\mathscr{B}}
\newcommand\Fc{\mathscr{F}}
\newcommand\Hc{\mathscr{H}}
\newcommand\Oc{\mathscr{O}}
\newcommand\Pc{\mathscr{P}}
\newcommand\eps{\varepsilon}
\newcommand\om{\omega}
\newcommand\Om{\Omega}
\newcommand\sig{\sigma}
\newcommand\Lam{\Lambda}
\newcommand\gam{\gamma}
\newcommand\Gam{\Gamma}
\newcommand\lam{\lambda}
\newcommand\del{\delta}
\newcommand\Acb{\bar{\Ac}}
\newcommand\fb{\overline{f}}
\newcommand\xb{\bar{x}}
\newcommand\yb{\bar{y}}
\newcommand\zb{\bar{z}}
\newcommand\ub{\bar{u}}
\renewcommand\d{\partial}
\newcommand\ii{\mathtt{i}}
\newcommand\dd{\mathrm{d}}
\newcommand\ee{\mathrm{e}}
\newcommand\mf{\mathfrak{m}}
\newcommand{\half}{\frac{1}{2}}
\newcommand{\thalf}{\tfrac{1}{2}}
\newcommand{\Gamh}{\widehat \Gam}
\begin{document}

\title{Approximate pricing of European and Barrier claims in a local-stochastic volatility setting}

\author{
Weston Barger
\thanks{Department of Applied Mathematics, University of Washington.  \textbf{e-mail}: \url{wdbarger@uw.edu}}
\and
Matthew Lorig
\thanks{Department of Applied Mathematics, University of Washington.  \textbf{e-mail}: \url{mlorig@uw.edu}}
}

\date{This version: \today}

\maketitle

\begin{abstract}
We derive asymptotic expansions for the prices of a variety of European and barrier-style claims in a general local-stochastic volatility setting.  Our method combines Taylor series expansions of the diffusion coefficients with an expansion in the correlation parameter between the underlying asset and volatility process.  Rigorous accuracy results are provided for European-style claims.  For barrier-style claims, we include several numerical examples to illustrate the accuracy and versatility of our approximations.
\end{abstract}

%
%

\section{Introduction}
\label{sec:intro}
Barrier-style claims are among the most liquid path-dependent claims.  As barrier-style claims are generally cheaper than their European counterparts, the former are popular among speculators who wish to bet on market movements while taking advantage of the lower prices barrier-style claims entail.  Yet, despite their widespread use, barrier-style claims remain challenging to price.
\par
In his landmark work, \cite{merton1973theory} was the first to value a down-and-out call in closed form when the underlying stock follows geometric Brownian motion (GBM).  There exists a variety of static hedging results for barrier claims in GBM or GBM-like settings.  For example, in a GBM framework, \cite{bowie-carr-1994} show that the payoff of a down-and-out call with barrier $L$ can be replicated by buying a European call on the same underlying futures price with the same maturity $T$ and strike $K$ and also selling $K/L$ puts with strike $L^2/K$.  \cite{carr1998static} prove that this static hedge works in any model with local volatility, provided that the volatility function is symmetric in the log of the futures price relative to the barrier.  \cite{pcs} make clear that symmetry condition is merely sufficient, but not necessary.  The hedge described above for a down-and-out call works provided that there are no jumps over the barrier and the call and put have the same implied volatility at the first passage time to the barrier, a condition referred to as Put Call Symmetry (PCS), which was introduced to finance by \cite{bates1988} as a way to measure skewness.  More recently, \cite{carr-nadtochiy} show how to statically hedge barrier options for a general class of local volatility models.  And \cite{bscpv} develop semi-static hedges for barrier-style claims on price and volatility.
\par
Unfortunately, the restrictive symmetry conditions described above prohibit static hedging results from being applied when the underlying is described by any of the models that are most frequently used to price European options: CEV \cite{CoxCEV}, Heston \cite{heston1993} and SABR \cite{sabr}.  For these models, a number of closed-form pricing formulas have been developed and the associated hedging strategies are dynamic.  \cite{davydov2001pricing} price barrier-style claims in a CEV setting using eigenfunctions expansions.  Assuming zero correlation between the price and volatility-driving process, it is known that the underlying in a stochastic volatility model can be expressed as a time-changed GBM.  As a result, in the zero correlation setting, barrier options can be priced via Fourier Sine series (for double barrier options) or via Fourier Sine transforms (for single barrier options) so long as the Laplace transform of the time integral of the stochastic variance process is known in closed form.  This has been done in \cite{faulhaber2002analytic} for the Heston model and presumably could be carried out in the SABR model using the results of \cite{sabr-exact}, though, a detailed literature search did not reveal any paper in which the zero-correlation SABR computation has been carried out.
\par
Zero correlation stochastic volatility models induce symmetric implied volatility smiles and are not consistent with empirical evidence from equity markets, where smiles exhibit strong at-the-money skews.  It is therefore important to allow the underlying to be correlated with the  volatility-driving process.  
When correlation is non-zero, closed-form formulas for barrier option prices are not available and perturbation methods are often employed.
\cite{lipton2014pricing}, for example, finds approximate barrier options prices by expanding prices in a small parameter, which is equal to the correlation times the vol-of-vol.  \cite{lorig-fouque-jaimungal-1} price barrier options in a fast mean-reverting volatility setting.
And \cite{lorig-2} values barrier options and other claims for a class of multiscale stochastic volatility models (see \cite{fpss} for a review of these models).  Yet the methods of \cite{lipton2014pricing} and \cite{lorig-fouque-jaimungal-1} cannot be applied in the  CEV or SABR settings, and the results in \cite{lorig-2} require a separation of time scales between the price process and the corresponding fast and slow factors of volatility, which may not be realistic in certain markets.
\par
In this paper we consider a very general class of local-stochastic volatility models which naturally include the CEV, Heston and SABR models.  We find approximate prices of barrier-style claims by expanding the coefficients of infinitesimal generator of the underlying as a Taylor series about a fixed point.  The Taylor series expansion method was initially developed for European-style claims in scalar diffusion setting in \cite{pagliarani2011analytical} and later extended to $d$-dimensional diffusions in \cite{lorig-pagliarani-pascucci-2} and \cite{lorig-pagliarani-pascucci-4}.
\par
A significant mathematical challenge arises when extending the  methods developed in \cite{lorig-pagliarani-pascucci-4} for diffusions in $\Rb^d$ to diffusions in strict subsets of $\Rb^d$.  In particular, in $\Rb^d$, the zeroth order approximate transition density of a diffusion is given by a Gaussian kernel.  The Gaussian kernel is a function of the difference of the forward and backward variables.  This symmetry between forward and backward variables greatly simplifies the computations required to obtain higher order corrections to the transition density.  For a diffusion in a strict subset of $\Rb^d$ however, the zeroth order transition density approximation will no longer be a function of the difference of the forward and backward variables.  As a result, the computations needed to obtain higher order corrections to the transition density are significantly more involved.
\par
The rest of this paper proceeds as follows.  In Section \ref{sec:model}, we introduce a general local-stochastic volatility model and describe the option-pricing problems we wish to solve.  In Section \ref{sec:asymptotics}, we develop an asymptotic expansion for options prices.  This expansion leads to a sequence of nested PDE problems, which we solve explicitly in Section \ref{sec:explicit}. In Section \ref{sec:accuracy}, we establish the asymptotic accuracy of our approximation for European options.  Finally, in Section \ref{sec:numerics}, we provide several numerical illustrations of our pricing approximation for barrier-style claims and compare our results to prices obtained via Monte Carlo simulation.  Some concluding remarks are offered in Section \ref{sec:conclusion}.

%
%

\section{Market model}
\label{sec:model}
We consider a market defined on a complete, filtered probability space $(\Om, \Fc, \Fb=(\Fc_t)_{t \geq 0}, \Pb)$.  Here, the measure $\Pb$ represents the market's chosen pricing measure.  Let $S = (S_t)_{t \geq 0}$ be the value of a risky asset.  We suppose the dynamics of $S$ are given by
\begin{align}
S_t
	&=	f(X_t) , \\
\dd X_t
	&=	\mu(X_t,Y_t) \dd t + \sig(X_t, Y_t) \dd W_t , \label{eq:assetmodel}\\
\dd Y_t
	&=	c(X_t,Y_t) \dd t + g(X_t, Y_t) \dd B_t , \label{eq:CIRmodel}\\
\dd \< W, B \>_t
	&=	\rho\, \dd t .
\end{align}
where the function $f$ must be  positive, strictly increasing and $C^2$.  The processes $W = (W_t)_{t \geq 0}$ and $B = (B_t)_{t \geq 0}$ are driftless $(\Pb,\Fb)$-Brownian motions with constant correlation $\rho \in (-1,1)$.  We assume the dynamics of $(X,Y) = (X_t,Y_t)_{t \geq 0}$ are such that $(X,Y)$ has a unique strong solution, at least up until the first exit time of $X$ of some interval $I \subseteq \Rb$.  
\par
For simplicity, we take the risk-free rate of interest to be zero.  Thus, in order to preclude the possibility of arbitrage, the risky asset $S$ must be a martingale.  As a result, the function $\mu$, which controls the drift of $X$, must satisfy
\begin{align}
\mu
	&=	\frac{-f'' \sig^2}{2 f'}  .
\end{align} 
The condition on $\mu$ can be easily derived by computing $\dd f(X_t)$ and setting the $\dd t$-term  to zero.  Typical choices for $f$ are $f(x) = \ee^x$, in which case $\mu = -\thalf \sig^2$, or $f(x) = x$, in which case $\mu = 0$.
\par
We are interested in computing the price of a barrier-style claim, whose payoff at the maturity date $T$ is given by
\begin{align}
\text{Payoff}:&&
\Ib_{\{\tau > T\}} \varphi(X_T) , &&
\tau
	&=	\inf\{ t \geq 0 : X_t \notin I \} , \label{eq:payoff}
\end{align}
where $I$ is an interval in $\Rb$.  For a single-barrier claim with a barrier $L < X_0$ we have $I = (L,\infty)$.  For a double-barrier claim, we have $I = (L,U)$ where $L < X_0 < U$.  We also allow for the possibility that $I = \Rb$, which corresponds to a European claim on $X$.

\begin{remark}
When $I \neq \Rb$, payoffs of the form \eqref{eq:payoff} are \emph{knock-out} style payoffs.  A \emph{knock-in} style payoff is a payoff of the form
\begin{align}
\text{Payoff}:&&
\Ib_{\{\tau \leq T\}} \varphi(X_T) . \label{eq:knock-in}
\end{align}
It is known that the value of a knock-in claim with payoff \eqref{eq:knock-in} is equal to the value of a European claim with payoff $\varphi(X_T)$ minus the value of a knock-out claim with payoff \eqref{eq:payoff}.  Thus, by pricing both knock-out and European style claims we can also price knock-in style claims.
\end{remark}

The value $V_t$ of the claim with payoff \eqref{eq:payoff} at time $t \leq T$ is given by
\begin{align}
V_t
	&=	\Ib_{\{\tau > t\}} u(t,X_t,Y_t) , & 
u(t,x,y)
	&:=	\Eb \( \Ib_{\{\tau > T\}} \varphi(X_T) | X_t = x \in I , Y_t = y\) , \label{eq:V}
\end{align}
Under mild conditions, the function $u$, defined in \eqref{eq:V}, is the unique classical solution of the Kolmogorov Backward equation
\begin{align}
0
	&=	(\d_t + \Ac) u , &
u(T,\cdot)
	&=	\varphi , \label{eq:u.pde}
\end{align}
where $\Ac$, the generator of $(X,Y)$, is given explicitly by
\begin{align}
\Ac
	&=	\mu(x,y) \d_x + \thalf \sig^2(x,y) \d_x^2 + c(x,y) \d_y + \thalf g^2(x,y) \d_y^2 + \rho \sig(x,y) g(x,y) \d_x \d_y ,
	\label{eq:A}
\end{align}
and is defined to act on functions that are twice differentiable and satisfy certain boundary conditions
\begin{align}
\text{dom}(\Ac)
	&:=	\{ g \in C^2 : \lim_{x \to \d I} g(x,y) = 0 \} .
\end{align}
Here we use the notation $\d I$ to indicate a \emph{finite} endpoint of $I$.  So, for example, if $I = (L,\infty)$, then $\Ac$ acts on functions that satisfy $\lim_{x \searrow L} g(x,y) = 0$.  Throughout this paper, we assume a unique classical solution to \eqref{eq:u.pde} exists.  Our goal is to find the solution $u$ of PDE \eqref{eq:u.pde}.  As no explicit solution of \eqref{eq:u.pde} exists for general coefficients $(\mu,\sig,c,g)$, we shall seek instead an explicit approximation for $u$.  

%
%

\section{Formal asymptotic expansion}
\label{sec:asymptotics}
In this section, we will present a formal asymptotic expansion for $u$. To begin, let us introduce some notation.  For any coefficient of $\Ac$ we define
\begin{align}
\chi^\eps(x,y)
	&:=	\chi( \xb + \eps(x - \xb), \yb + \eps(y-\yb) ) , &
\chi
	&\in \{ \mu, \thalf \sig^2, c, \thalf g^2, \sig g\} ,
\end{align}
where $(\xb,\yb)$ is a fixed point and $\eps \in [0,1]$.  Next, we introduce an operator $\Ac^{\eps,\rho}$, which is given explicitly by
\begin{align}
\Ac^{\eps,\rho}
	&=	\mu^\eps(x ,y) \d_x + (\thalf \sig^2)^\eps(x,y) \d_x^2 + c^\eps(x,y) \d_y + (\thalf g^2)^\eps(x,y) \d_y^2 + \rho (\sig g)^\eps(x,y) \d_x \d_y ,
\end{align}
where $\text{dom}(\Ac^{\eps,\rho}) := \text{dom}(\Ac)$. Consider, now, a family of PDE problems, indexed by $(\eps,\rho)$
\begin{align}
0
	&=	(\d_t + \Ac^{\eps,\rho}) u^{\eps,\rho} , &
u^{\eps,\rho}(T,\cdot,\cdot)
	&=	\varphi . \label{eq:u.eps.pde}
\end{align}
Noting that $\Ac^{\eps,\rho}|_{\eps=1}=\Ac$ it follows from \eqref{eq:u.pde} and \eqref{eq:u.eps.pde} that $u^{\eps,\rho}|_{\eps=1}=u$.
Thus, rather than seek an approximation solution to PDE problem \eqref{eq:u.pde} directly, we shall instead seek an approximation solution to PDE problem \eqref{eq:u.eps.pde} by expanding $u^{\eps,\rho}$ in powers of $\eps$ and $\rho$ as follows
\begin{align}
u^{\eps,\rho}
	&=	\sum_{i=0}^\infty \sum_{j=0}^\infty \eps^i \rho^j u_{i,j} , \label{eq:u.expand}
\end{align}
where the functions $(u_{i,j})$ are (at present) unknown.  Once we obtain an approximation for $u^{\eps,\rho}$, our approximation for $u$ will be obtained by setting $\eps = 1$.
\par
Assume for the moment that the coefficients in $\Ac$ are analytic.  We shall see later that the approximation we obtain for $u$ does not require this assumption.  However, making this assumption simplifies the presentation considerably so we will temporarily proceed with it.  As the coefficients of $\Ac$ are analytic, we have
\begin{align}
\Ac^{\eps,\rho}
	&=	\sum_{n=0}^\infty \eps^n \( \Ac_{n,0} + \rho \Ac_{n,1} \) , \label{eq:A.expand} \\
\Ac_{n,0}
	&=	\mu_n(x ,y) \d_x + (\thalf \sig^2)_n(x,y) \d_x^2 + c_n(x,y) \d_y + (\thalf g^2)_n(x,y) \d_y^2 , \label{eq:An0} \\
\Ac_{n,1}
	&=	(\sig g)_n(x,y) \d_x \d_y , \label{eq:An1}
\end{align}
where $\text{dom}(\Ac_{0,0}):=\text{dom}(\Ac)$ and have introduced the notation
\begin{align}
\chi_n(x,y)
	&:=	\frac{1}{n!}\frac{\dd^n}{\dd \eps^n} \chi^\eps(x,y) \Big|_{\eps=0} 
	 =	\sum_{i=0}^n \frac{\d_{\xb}^i \d_{\yb}^{n-i} \chi(\xb,\yb)}{i!(n-i)!} (x-\xb)^i (y-\yb)^{n-i} , \label{eq:gn}
\end{align}
for $\chi \in \{ \mu, \thalf \sig^2, c, \thalf g^2, \sig g \}$.  Observe that $\chi_n$ is the $n$th order term in the Taylor series expansion of $\chi$ about the point $(\xb,\yb)$.
\par
Inserting expansions \eqref{eq:u.expand} and \eqref{eq:A.expand} into PDE problem \eqref{eq:u.eps.pde} and collecting terms with like powers of $\eps$ and $\rho$, we obtain 
\begin{align}
\Oc(1):&&
0
	&=	(\d_t + \Ac_{0,0}) u_{0,0}  , &
u_{0,0}(T,\cdot,\cdot) 
	&=	\varphi , \label{eq:u00.pde} \\
\Oc(\eps^n \rho^k):&&
0
	&=	(\d_t + \Ac_{0,0}) u_{n,k} + \sum_{i=0}^n \sum_{j=0}^k (1 - \delta_{i+j,0})\Ac_{i,j} u_{n-i,k-j}, &
u_{n,k}(T,\cdot,\cdot)
	&=	0 . \label{eq:unk.pde} 
\end{align}
For clarity, we present the lowest order terms explicitly here
\begin{align}
\Oc(\eps):&&
0
	&=	(\d_t + \Ac_{0,0}) u_{1,0} + \Ac_{1,0} u_{0,0}  , &
u_{1,0}(T,\cdot,\cdot)
	&=	0 , \label{eq:u10.pde} \\
\Oc(\rho):&&
0
	&=	(\d_t + \Ac_{0,0}) u_{0,1} + \Ac_{0,1} u_{0,0}  , &
u_{0,1}(T,\cdot,\cdot)
	&=	0 , \label{eq:u01.pde} \\
\Oc(\eps^2):&&
0
	&=	(\d_t + \Ac_{0,0}) u_{2,0} + \Ac_{2,0} u_{0,0} + \Ac_{1,0} u_{1,0}  , &
u_{2,0}(T,\cdot,\cdot)
	&=	0 , \label{eq:u20.pde} \\
\Oc(\eps \rho):&&
0
	&=	(\d_t + \Ac_{0,0}) u_{1,1} + \Ac_{1,1} u_{0,0} + \Ac_{1,0} u_{0,1} + \Ac_{0,1} u_{1,0} , &
u_{1,1}(T,\cdot,\cdot)
	&=	0 , \label{eq:u11.pde} \\
\Oc(\rho^2):&&
0
	&=	(\d_t + \Ac_{0,0}) u_{0,2} + \Ac_{0,1} u_{0,1} + \Ac_{0,2}u_{0,0}, &
u_{0,2}(T,\cdot,\cdot)
	&=	0 . \label{eq:u02.pde} 
\end{align}
The above computation motivates the following definition.

\begin{definition}
\label{def:ubar}
Let $u$ be the unique classical solution of PDE problem \eqref{eq:u.pde}.  We define $\ub_N^\rho$, the \emph{$N$th order approximation of $u$}, as
\begin{align}
\ub_N^\rho(t,x,y)
	&:=	\sum_{i=0}^N \sum_{j=0}^i \eps^j \rho^{i-j} u_{j,i-j}(t,x,y) \Big|_{(\xb,\yb,\eps)=(x,y,1)} , \label{eq:ubar}
\end{align} 
where $u_{0,0}$ satisfies \eqref{eq:u00.pde} and $u_{n,k}$ satisfies \eqref{eq:unk.pde} for $(n,k) \neq (0,0)$.
\end{definition}

\begin{remark}
Observe that we have set $\eps = 1$ in \eqref{eq:ubar} and, as a result, the parameter $\eps$ plays no role in the definition for $\ub_N^\rho$.  Indeed, $\eps$ was introduced merely as an accounting tool in the formal asymptotic expansion above.  As $\eps$ does not appear in the original PDE problem \eqref{eq:u.pde} it should not appear in the approximation for $u$.
\end{remark}

\begin{remark}
Note that we have set $(\xb,\yb)=(x,y)$ in \eqref{eq:ubar}.  This is often a point of confusion, and we wish to clarify how this should be handled.  First, one should solve the sequence of nested PDE problems \eqref{eq:u00.pde}--\eqref{eq:unk.pde} with $(\xb,\yb)$ \emph{fixed}.  To be explicit, let us denote the solution of the $\Oc(\eps^n \rho^k)$ PDE as $u_{n,k}^{\xb,\yb}$.  If one is then interested in the approximate value of $u$ at the point $(x,y)$, one should then compute $u_{n,k}^{\xb,\yb}(x,y)|_{(\xb,\yb)=(x,y)}$ in the sum \eqref{eq:ubar}.  The reason for choosing $(\xb,\yb)=(x,y)$ is as follows.  The small-time behavior of a diffusion is predominantly determined by the geometry of the diffusion coefficients near the starting point of the diffusion $(x,y)$.  In turn, the most accurate Taylor series expansion of any function near the point $(x,y)$ is the Taylor series expansion centered at $(\xb,\yb)=(x,y)$.
\end{remark}

\begin{remark}
As previously mentioned, analyticity of the coefficients of $\Ac$ is \emph{not} required.  Indeed, to construct the $N$th order approximation $\ub_N^\rho$ one requires only the operators $\Ac_{k,j}$ for $k \leq N$.  Thus, the $N$th order approximation of $u$ requires only that the coefficients of $\Ac$ be $C^N$.
\end{remark}

%
%

\section{Explicit expressions}
\label{sec:explicit}
In this section, we provide explicit expressions for the functions $(u_{n,k})$ required to compute $\ub_N^\rho$, the $N$th order approximation of $u$.  We begin with a review of Duhamel's principle.  Let $\Gam_{0,0}$ be the fundamental solution of parabolic operator $(\d_t + \Ac_{0,0})$.  That is,
\begin{align}
0
	&=	(\d_t + \Ac_{0,0})\Gam_{0,0}(\cdot,\cdot,\cdot;T,\xi,\eta) , &
\Gam_{0,0}(T,\cdot,\cdot;T,\xi,\eta)
	&=	\del_{\xi,\eta} . \label{eq:Gam00.pde}
\end{align}
Duhamel's principle states that the the unique classical solution to 
\begin{align}
0
	&=	(\d_t + \Ac_{0,0})u + f , &
u(T,\cdot,\cdot)
	&=	h ,
\end{align}
is given by
\begin{align}
u(t,x,y)
	&=	\Pc_{0,0}(t,T)h(x,y) + \int_t^T \dd s \, \Pc_{0,0}(t,s)f(s,x,y) , 
\end{align}
where we have introduced $\Pc_{0,0}$ the \emph{semigroup} generated by $\Ac_{0,0}$, which is defined as follows
\begin{align}
\Pc_{0,0}(t,s) \varphi(x,y)
	&=	\int_I \dd \xi \int_\Rb \dd \eta \, \Gam_{0,0}(t,x,y;s,\xi,\eta) \varphi(\xi,\eta) , \label{eq:P00}
\end{align}
where $0 \leq t \leq s \leq T$.

\begin{proposition}
\label{prp:u}
Let the functions $(u_{n,k})$ be the the unique classical solution of the nested sequence of PDE problems \eqref{eq:u00.pde}--\eqref{eq:unk.pde}.  Then, omitting the spacial arguments $(x,y)$ to ease notation, the function $u_{0,0}$ is given by
\begin{align}
u_{0,0}(t)
	&=	\Pc_{0,0}(t,T) \varphi ,\label{eq:u00}
\end{align}
where $\Pc_{0,0}$ is defined in \eqref{eq:P00}, and for $(n,k)\neq(0,0)$, we have
\begin{align}
u_{n,k}(t)
	&=\sum_{j = 1}^{n+k} \sum_{I_{n,k,j}} \int_t^T \dd s_1 \int_{s_1}^T \dd s_2 \cdots \int_{s_{j-1}}^T \dd s_j\label{eq:unk} \\ &\quad
			\Pc_{0,0}(t,s_1) \Ac_{n_{1},k_{1}} 
			\Pc_{0,0}(s_1,s_2) \Ac_{n_{2},k_2}\cdots 
			\Pc_{0,0}(s_{j-1},s_{j}) \Ac_{n_j,k_j}
			\Pc_{0,0}(s_j,T) \varphi ,
\end{align}
with $I_{n,k,j}$ given by 
\begin{align}
I_{n,k,j} 
	= \left\{
	\begin{pmatrix} 
		n_1 ,\cdots, n_j  \\ 
	        k_1, \cdots, k_j           
	\end{pmatrix}
	\in \Zb_+^{2 \times j}
					\;\middle|\;
	\begin{aligned}
	& n_1 + \cdots +n_j = n,\\
	&k_1+\cdots+k_j = k, \\
	&1 \leq n_i + k_i, \text{ for all }
	1 \leq i \leq j
	\end{aligned} 
	\right\}. 
\end{align} 
\end{proposition}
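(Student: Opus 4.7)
The plan is to prove the formula \eqref{eq:unk} by strong induction on the total order $N := n+k$, using Duhamel's principle (as recalled at the start of the section) to invert $(\d_t + \Ac_{0,0})$ against each inhomogeneous term.

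For the base case $N=0$, i.e.\ $(n,k)=(0,0)$, PDE \eqref{eq:u00.pde} is an unforced backward Cauchy problem with terminal datum $\varphi$, so Duhamel (with source $f = 0$ and $h = \varphi$) yields $u_{0,0}(t) = \Pc_{0,0}(t,T)\varphi$, which is \eqref{eq:u00}. For the inductive step, fix $(n,k)$ with $n+k = N \geq 1$ and assume \eqref{eq:unk} holds for every $(n',k') \neq (0,0)$ with $n'+k' < N$. Reading off \eqref{eq:unk.pde}, the source is $f(s,\cdot) = \sum_{a=0}^n \sum_{b=0}^k (1-\delta_{a+b,0}) \Ac_{a,b} u_{n-a,k-b}(s,\cdot)$, so Duhamel gives
\begin{align}
u_{n,k}(t) = \int_t^T \dd s_1 \, \Pc_{0,0}(t,s_1) \sum_{\substack{0 \leq a \leq n,\ 0 \leq b \leq k \\ a+b \geq 1}} \Ac_{a,b}\, u_{n-a,k-b}(s_1).
\end{align}

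I would then split the inner sum into two cases. Case (i): $(a,b) = (n,k)$, for which $u_{n-a,k-b}(s_1) = u_{0,0}(s_1) = \Pc_{0,0}(s_1,T)\varphi$ by the base case; this contributes exactly the $j=1$ term of \eqref{eq:unk} with $(n_1,k_1) = (n,k)$, corresponding to the unique element of $I_{n,k,1}$. Case (ii): $(a,b) \neq (n,k)$ with $a+b \geq 1$, so $(n-a,k-b) \neq (0,0)$ has total strictly less than $N$; the inductive hypothesis expresses $u_{n-a,k-b}(s_1)$ as a sum, over $j' = 1,\ldots,n+k-a-b$ and over tuples $((m_i,l_i))_{i=1}^{j'} \in I_{n-a,k-b,j'}$, of iterated integrals of the form $\Pc_{0,0}(s_1,s_2)\Ac_{m_1,l_1}\cdots \Pc_{0,0}(s_{j'+1},T)\varphi$ over the simplex $\{s_1 \leq s_2 \leq \cdots \leq s_{j'+1} \leq T\}$. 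Prepending the outer $\int_t^T \dd s_1 \, \Pc_{0,0}(t,s_1)\Ac_{a,b}$ produces a string of length $j := j'+1$ whose coefficient tuple $((a,b),(m_1,l_1),\ldots,(m_{j-1},l_{j-1}))$ lies in $I_{n,k,j}$.

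The remaining step is the combinatorial check that case (i) and case (ii) together reproduce \eqref{eq:unk} exactly: for each $j \geq 1$, every tuple $((n_i,k_i))_{i=1}^j \in I_{n,k,j}$ arises exactly once, via the bijection $(a,b) \leftrightarrow (n_1,k_1)$ and $(m_i,l_i) \leftrightarrow (n_{i+1},k_{i+1})$ for $i \geq 1$, with the constraints $\sum n_i = n$, $\sum k_i = k$, $n_i + k_i \geq 1$ on $I_{n,k,j}$ translating into $\sum m_i = n-a$, $\sum l_i = k-b$, $m_i + l_i \geq 1$ on $I_{n-a,k-b,j-1}$. This combinatorial bookkeeping, together with nesting the outer $\dd s_1$ integral with the inductively supplied $\dd s_2 \cdots \dd s_j$ simplex to obtain the iterated-integral form in \eqref{eq:unk}, is the main (though routine) obstacle; the analytic content is contained entirely in the single application of Duhamel's principle at each induction level, which is justified by the standing assumption that $\Gam_{0,0}$ exists and $\Pc_{0,0}$ is a well-defined semigroup on a class of functions containing the relevant sources.
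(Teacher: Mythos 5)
Your proposal is correct and follows essentially the same route as the paper's proof: both apply Duhamel's principle to \eqref{eq:unk.pde}, invoke an inductive hypothesis on lower-order terms, and close the argument via the decomposition of $I_{n,k,j}$ obtained by peeling off the first factor $(n_1,k_1)$. Your formulation via strong induction on $N=n+k$ is marginally cleaner than the paper's, which increments one index at a time (proving the $(n+1,k)$ case and declaring $(n,k+1)$ analogous), but the combinatorial bijection you describe is precisely the identity the paper records for $I_{n+1,k,\ell}$ in terms of $I_{n-i+1,k-j,\ell-1}$.
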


\begin{proof}
See Appendix \ref{App:AppendixA}.
\end{proof}

For clarity, we present the lowest order terms here
\begin{align}
u_{1,0}(t)
	&=	\int_t^T \dd s_1 \, \Pc_{0,0}(t,s_1) \Ac_{1,0} \Pc_{0,0}(s_1,T) \varphi , \label{eq:u10.expl}\\
u_{0,1}(t)
	&=	\int_t^T \dd s_1 \, \Pc_{0,0}(t,s_1) \Ac_{0,1} \Pc_{0,0}(s_1,T) \varphi , \\
u_{2,0}(t)
	&=	\int_t^T \dd s_1 \int_{s_1}^T \dd s_2 \, \Pc_{0,0}(t,s_1) \Ac_{1,0} \Pc_{0,0}(s_1,s_2) \Ac_{1,0} \Pc_{0,0}(s_2,T) \varphi \\ &\quad
			+ \int_t^T \dd s_1 \, \Pc_{0,0}(t,s_1) \Ac_{2,0} \Pc_{0,0}(s_1,T) \varphi , \\
u_{1,1}(t)
	&= \int_t^T \dd s_1 \int_{s_1}^T \dd s_2 \, \Pc_{0,0}(t,s_1) \Ac_{0,1} \Pc_{0,0}(s_1,s_2) \Ac_{1,0} \Pc_{0,0}(s_2,T) \varphi \\ &\quad
			+ \int_t^T \dd s_1 \int_{s_1}^T \dd s_2 \, \Pc_{0,0}(t,s_1) \Ac_{1,0} \Pc_{0,0}(s_1,s_2) \Ac_{0,1} \Pc_{0,0}(s_2,T) \varphi \\ &\quad
			+ \int_t^T \dd s_1 \, \Pc_{0,0}(t,s_1) \Ac_{1,1} \Pc_{0,0}(s_1,T) \varphi , \\
u_{0,2}(t)
	&=	\int_t^T \dd s_1 \int_{s_1}^T \dd s_2 \, \Pc_{0,0}(t,s_1) \Ac_{0,1} \Pc_{0,0}(s_1,s_2) \Ac_{0,1} \Pc_{0,0}(s_2,T) \varphi \\&\quad
	                + \int_t^T \dd s_1\, \Pc_{0,0}(t,s_1) \Ac_{0,2} \Pc_{0,0}(s_1,T) \varphi.
\end{align}
To proceed further, we must specify explicitly the action of the semigroup $\Pc_{0,0}$.  We will consider three separate cases: European claims, single-barrier claims, and double-barrier claims.


\subsection{European claims}
In this section, we consider the case $I = \Rb$.  As $\tau = \infty$ when $I = \Rb$, we see from \eqref{eq:payoff} that this case corresponds to a European claim written on $X$.  We begin with the following lemma.

\begin{lemma}
\label{lem:eu}
Let $\Hc$ be the following linear operator
\begin{align}
\Hc
	&=	b \d_x + a \d_x^2 , &
\textup{dom}(\Hc)
	&=	C^2(\Rb) .
\end{align}
The following holds
\begin{align}
\Hc \psi_\om
	&=	\lam_\om \psi_\om ,
&\om
	 &\in \Rb,  \\
\psi_\om(x)
	&=	\frac{1}{\sqrt{2\pi}}\ee^{\ii \om x} ,
&\lam_\om(b,a)
	&=	b \ii \om - a \om^2 \label{eq:psi.lam}.
\end{align}
Moreover, we have
\begin{align}
\< \psi_\om, \psi_\gam \>
	&=	\del(\om - \gam) , &
\< f, g \>
	&:=	\int_\Rb \dd x \, \fb(x) g(x) , \label{eq:ortho.eu}
\end{align}
where $\del(\om-\gam)$ is a Dirac delta function and $\fb$ denotes the complex conjugate of $f$.
\end{lemma}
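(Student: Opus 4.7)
The statement is essentially a direct computation combined with the standard Fourier representation of the Dirac delta, so my plan is split into two short, independent parts.

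For the eigenvalue identity, I would just differentiate $\psi_\om(x) = (2\pi)^{-1/2} \ee^{\ii \om x}$ twice with respect to $x$, obtaining $\d_x \psi_\om = \ii \om \psi_\om$ and $\d_x^2 \psi_\om = -\om^2 \psi_\om$, so that
\begin{align}
\Hc \psi_\om = b\, \d_x \psi_\om + a\, \d_x^2 \psi_\om = (b \ii \om - a \om^2)\psi_\om = \lam_\om(b,a)\, \psi_\om,
\end{align}
with $\psi_\om \in C^2(\Rb)$ so that the application of $\Hc$ is justified. This gives the first claim.

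For the orthogonality identity, I would substitute the definition of $\psi_\om$ into the pairing to get
\begin{align}
\< \psi_\om, \psi_\gam \> = \int_\Rb \dd x\, \frac{1}{\sqrt{2\pi}} \ee^{-\ii \om x} \cdot \frac{1}{\sqrt{2\pi}} \ee^{\ii \gam x} = \frac{1}{2\pi} \int_\Rb \dd x\, \ee^{\ii (\gam - \om) x},
\end{align}
and then invoke the standard distributional identity $\frac{1}{2\pi}\int_\Rb \ee^{\ii \xi x}\, \dd x = \del(\xi)$, evaluated at $\xi = \gam - \om$, together with the symmetry $\del(\gam - \om) = \del(\om - \gam)$.

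There is no real obstacle here. The only subtlety worth flagging is that $\psi_\om$ is not in $L^2(\Rb)$, so the pairing $\<\cdot,\cdot\>$ and the resulting equality $\<\psi_\om,\psi_\gam\> = \del(\om - \gam)$ must be interpreted in the distributional (or rigged-Hilbert-space) sense, as is standard when using plane-wave eigenfunctions of constant-coefficient operators on the line. Given that the surrounding framework will apply these identities through Fourier inversion, this distributional interpretation is exactly what is needed downstream and requires no further justification at this stage.
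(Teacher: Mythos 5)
Your proof is correct and is exactly the direct computation the paper has in mind; the paper simply states "The lemma can easily be checked by direct computation." Your remark about interpreting $\<\psi_\om,\psi_\gam\> = \del(\om-\gam)$ distributionally is appropriate and consistent with how the identity is used downstream.
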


\begin{proof}
The lemma can easily be checked by direct computation.
\end{proof}

\begin{proposition}
\label{prp:P00.eu}
Let $\Pc_{0,0}$ be the semigroup generated by $\Ac_{0,0}$ with $\textup{dom}(\Ac_{0,0}) = C^2(\Rb^2)$.  Then we have
\begin{align}
\Pc_{0,0}(t,T) f
	&=	\int_\Rb \dd \om \int_\Rb \dd \gam \, \ee^{\Lam_{\om,\gam}(T-t)} \< \Psi_{\om,\gam}, f \> 
\Psi_{\om,\gam} , \label{eq:P00.eu} \\
\< f, g \>
	&:=	\int_\Rb \dd x \int_\Rb \dd y \, \fb(x,y) g(x,y) ,
\end{align}
where $\Psi_{\om,\gam}$ and $\Lam_{\om,\gam}$ are given by
\begin{align}
\Psi_{\om,\gam} (x,y)
	&=	\psi_\om(x) \psi_\gam(y) , &
\Lam_{\om,\gam}
	&=	\lam_\om(b_1,a_1) + \lam_\gam(b_2,a_2) , \\
(b_1, a_1)
	&=	\(\mu_0,(\thalf\sig^2)_0\), &
(b_2, a_2)
	&=	\(c_0,(\thalf g^2)_0\) ,
\end{align}
with $\psi_\om$ and $\lam_\om(b,a)$ as defined in \eqref{eq:psi.lam}.
\end{proposition}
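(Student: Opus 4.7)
The plan is to exploit the fact that $\Ac_{0,0}$ has constant coefficients and separates as a sum of two commuting one-dimensional operators, each of the form treated in Lemma \ref{lem:eu}. Indeed, setting $n=0$ in \eqref{eq:gn} kills all spatial dependence, so $\Ac_{0,0} = \Hc^x + \Hc^y$ with $\Hc^x = b_1 \d_x + a_1 \d_x^2$ and $\Hc^y = b_2 \d_y + a_2 \d_y^2$, where $(b_1,a_1)$ and $(b_2,a_2)$ are as stated. Commutativity of $\Hc^x$ and $\Hc^y$ is immediate since they act on distinct variables, so the associated semigroup factorises as a tensor product.

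By Lemma \ref{lem:eu} applied separately in $x$ and $y$, the plane wave $\Psi_{\om,\gam}(x,y)=\psi_\om(x)\psi_\gam(y)$ is a joint eigenfunction,
\begin{align}
\Ac_{0,0}\Psi_{\om,\gam} = \bigl(\lam_\om(b_1,a_1)+\lam_\gam(b_2,a_2)\bigr)\Psi_{\om,\gam} = \Lam_{\om,\gam}\Psi_{\om,\gam},
\end{align}
while the tensor product of the orthogonality relation \eqref{eq:ortho.eu} in each variable yields $\<\Psi_{\om,\gam},\Psi_{\om',\gam'}\>=\del(\om-\om')\del(\gam-\gam')$, i.e.\ the two-dimensional Plancherel identity on $\Rb^2$.

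With these pieces in place, I would verify that the right-hand side of \eqref{eq:P00.eu}, call it $v(t,x,y)$, solves the Kolmogorov terminal-value problem \eqref{eq:Gam00.pde} in semigroup form. Differentiating under the integral sign and invoking the eigenvalue identity,
\begin{align}
(\d_t+\Ac_{0,0})\, v = \int_\Rb\dd\om\int_\Rb\dd\gam\,\bigl(-\Lam_{\om,\gam}+\Lam_{\om,\gam}\bigr)\ee^{\Lam_{\om,\gam}(T-t)}\<\Psi_{\om,\gam},f\>\Psi_{\om,\gam} = 0,
\end{align}
while at $t=T$ the formula collapses to the two-dimensional Fourier inversion of $f$, giving $v(T,\cdot,\cdot)=f$. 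Uniqueness of the classical solution to the Cauchy problem for $(\d_t + \Ac_{0,0})$ then identifies $v$ with $\Pc_{0,0}(t,T)f$ defined via \eqref{eq:P00}.

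The only real obstacle is justifying the manipulations above. Since $\Re\Lam_{\om,\gam}=-a_1\om^2-a_2\gam^2$ is nonpositive whenever the local diffusion coefficients $(\thalf\sig^2)(\xb,\yb)$ and $(\thalf g^2)(\xb,\yb)$ are strictly positive, the exponential factor is a Schwartz multiplier for every $t<T$, so the differentiations and Fubini interchanges are routine on Schwartz-class payoffs. Extension to the desired class of $f$ is a standard density argument: evaluating the $\om,\gam$-integrals against $\del_{\xi,\eta}$ recovers the explicit bivariate Gaussian fundamental solution, whose standard heat-kernel tail bounds accommodate the payoffs of interest. I would only sketch this final step rather than recapitulate well-known Fourier-analytic estimates.
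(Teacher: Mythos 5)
Your proof is correct and follows essentially the same route as the paper: both use Lemma~\ref{lem:eu} to diagonalise the constant-coefficient operator $\Ac_{0,0}$ in the Fourier basis $\Psi_{\om,\gam}$ and then verify by direct computation that the resulting eigenfunction expansion solves $(\d_t+\Ac_{0,0})v=0$ with the correct terminal data. The paper phrases this by constructing the fundamental solution $\Gam_{0,0}$ explicitly and inserting it into \eqref{eq:P00}, whereas you verify the semigroup identity directly and invoke uniqueness, but the underlying argument is the same.
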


\begin{proof}
Using Lemma \ref{lem:eu} and
\begin{align}
\int_\Rb \dd \om \, \overline{\psi}_\om(x) \psi_\om(y)
	&=	\del(x-y) ,
\end{align}
one can check by direct computation that 
\begin{align}
\Gam_{0,0}(t,x,y;T;\xi,\eta)
	&:=	\int_\Rb \dd \om \int_\Rb \dd \gam \, \ee^{\Lam_{\om,\gam}(T-t)} \Psi_{\om,\gam}(x,y) 
\Psi_{\om,\gam}(\xi,\eta) , \label{eq:Gam00.eu}
\end{align}
satisfies \eqref{eq:Gam00.pde} and is therefore the fundamental solution of $(\d_t + \Ac_{0,0})$.  Expression \eqref{eq:P00.eu} follows directly by inserting \eqref{eq:Gam00.eu} into \eqref{eq:P00}.
\end{proof}

Now, from Proposition \ref{prp:u}, we see that the $(u_{n,k})$ are a sum of terms of the form
\begin{align}
A
	&= \( \prod_{i=1}^j  \int_{s_{i-1}}^T \dd s_i \, \Pc_{0,0}(s_{i-1},s_{i}) \Ac_{n_i,k_i} \) \Pc_{0,0}(s_j,T) 
\varphi  && 
	s_0 = t. \label{eq:Aset}
\end{align}
Using \eqref{eq:P00.eu}, we can write these terms as
\begin{align}
A
	&=	\int_\Rb \dd \om_{j+1} \int_\Rb \dd \gam_{j+1} \( \prod_{i=1}^j \int_{s_{i-1}}^T \dd s_i \int_\Rb \dd \om_i \int_\Rb \dd \gam_i \,
		\ee^{\Lam_{\om_i,\gam_i}(s_i-s_{i-1})} \< \Psi_{\om_i,\gam_i} , \Ac_{n_i,k_i} \Psi_{\om_{i+1},\gam_{i+1}} \> \) \\ &\quad
		\< \Psi_{\om_{j+1},\gam_{j+1}} , \varphi \> \ee^{\Lam_{\om_{j+1},\gam_{j+1}}(T-s_j)} \Psi_{\om_1,\gam_1} . \label{eq:A.eu}
\end{align}
Although the multiple integral may seem unwieldy, we shall see that all but a single integral collapses when we compute the elements
\begin{align}
\< \Psi_{\om_i,\gam_i} , \Ac_{n_i,k_i} \Psi_{\om_{i+1},\gam_{i+1}} \> ,
\end{align}
which appear in \eqref{eq:A.eu}

\begin{lemma}
\label{lem:bracket.eu}
Let $\Psi_{\om,\gam}$ and $\< \cdot, \cdot \>$ be as defined in Proposition \ref{prp:P00.eu}.  Define the operator
\begin{align}
\Bc
	&:= x^i y^j \d_x^k \d_y^l , \label{eq:B}
\end{align}
where $i,j,k,l \in \Zb_+$.  Then we have
\begin{align}
\< \Psi_{\om',\gam'}, \Bc \Psi_{\om,\gam} \>
	&=	(\ii \om)^k (\ii \gam)^l (- \ii \d_\om)^i (-\ii \d_\gam)^j \del(\om-\om') \del(\gam-\gam') .
\end{align}
\end{lemma}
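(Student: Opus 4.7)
The proof is a direct computation that exploits the Fourier-type eigenfunction structure of $\Psi_{\om,\gam}$. My plan is to act with $\Bc$ on $\Psi_{\om,\gam}$ in two stages and then read off the inner product using the orthogonality relation from Lemma \ref{lem:eu}.

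First, since $\Psi_{\om,\gam}(x,y) = \psi_\om(x)\psi_\gam(y)$ is a product of exponentials, the $x$- and $y$-derivatives act as multiplication by eigenvalues: $\d_x \psi_\om = \ii\om\,\psi_\om$ and $\d_y \psi_\gam = \ii\gam\,\psi_\gam$, so iterating gives $\d_x^k \d_y^l \Psi_{\om,\gam} = (\ii\om)^k (\ii\gam)^l \Psi_{\om,\gam}$. These are scalars in $(x,y)$ and may be factored out of the pairing at once. Second, I will use the dual identity $x\,\psi_\om(x) = -\ii\,\d_\om \psi_\om(x)$, which follows from differentiating the exponential $\frac{1}{\sqrt{2\pi}}\ee^{\ii\om x}$ in $\om$. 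Iterated, this yields $x^i y^j \Psi_{\om,\gam} = (-\ii\d_\om)^i (-\ii\d_\gam)^j \Psi_{\om,\gam}$. Combining the two,
\begin{align}
\Bc\Psi_{\om,\gam}
  = (\ii\om)^k (\ii\gam)^l\, x^i y^j \Psi_{\om,\gam}
  = (\ii\om)^k (\ii\gam)^l (-\ii\d_\om)^i (-\ii\d_\gam)^j \Psi_{\om,\gam} .
\end{align}
Taking the inner product with $\Psi_{\om',\gam'}$, the prefactor $(\ii\om)^k(\ii\gam)^l$ comes straight out since it is independent of $(x,y)$; the $\d_\om$- and $\d_\gam$-operators also commute with the $(x,y)$-integration, giving
\begin{align}
\langle \Psi_{\om',\gam'}, \Bc\Psi_{\om,\gam}\rangle
  = (\ii\om)^k (\ii\gam)^l (-\ii\d_\om)^i (-\ii\d_\gam)^j
    \langle \Psi_{\om',\gam'}, \Psi_{\om,\gam}\rangle .
\end{align}
Finally, the inner product equals $\del(\om-\om')\del(\gam-\gam')$ by applying the one-dimensional orthogonality \eqref{eq:ortho.eu} to each factor separately, and the claimed formula drops out.

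There is no real obstacle: the only delicate point is the interchange of the $(\om,\gam)$-derivatives with the $(x,y)$-integration, which is not literally valid as an absolutely convergent integral but is perfectly legitimate in the sense of tempered distributions (equivalently, after pairing both sides against a test function in $(\om',\gam')$). This is the same distributional interpretation already implicit in writing $\langle\psi_\om,\psi_\gam\rangle = \del(\om-\gam)$, so no new machinery is needed.
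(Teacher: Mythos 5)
Your proposal is correct and follows essentially the same route as the paper: both proofs pull $(\ii\om)^k(\ii\gam)^l$ out as eigenvalues of $\d_x^k\d_y^l$, convert multiplication by $x^i y^j$ into the operator $(-\ii\d_\om)^i(-\ii\d_\gam)^j$ acting on the plane wave, and then invoke the orthogonality relation \eqref{eq:ortho.eu}. The only presentational difference is that you apply these steps directly to $\Psi_{\om,\gam}$ before taking the inner product, while the paper carries them out inside the integral; the added remark on the distributional interpretation of the derivative--integral interchange is a sound clarification that the paper leaves implicit.
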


\begin{proof}
The proof is a straightforward computation.  Recalling that $\Psi_{\om,\gam} = \tfrac{1}{2\pi}\ee^{\ii \om x + \ii \gam y}$, we have
\begin{align}
\< \Psi_{\om',\gam'}, \Bc \Psi_{\om,\gam} \>
	&=	\int_\Rb \dd x \int_\Rb \dd y \, \overline{\Psi}_{\om',\gam'}(x,y) x^i y^j \d_x^k \d_y^l \Psi_{\om,\gam}(x,y) \\
	&=	(\ii \om)^k (\ii \gam)^l \int_\Rb \dd x \int_\Rb \dd y \, \overline{\Psi}_{\om',\gam'}(x,y) x^i y^j \Psi_{\om,\gam}(x,y) \\
	&=	(\ii \om)^k (\ii \gam)^l (- \ii \d_\om)^i (-\ii \d_\gam)^j \int_\Rb \dd x \int_\Rb \dd y \, \overline{\Psi}_{\om',\gam'}\Psi_{\om,\gam}(x,y) \\
	&=	(\ii \om)^k (\ii \gam)^l (- \ii \d_\om)^i (-\ii \d_\gam)^j \del(\om-\om') \del(\gam-\gam') .
\end{align}
where, in the last step, we have used \eqref{eq:ortho.eu}.
\end{proof}

\begin{remark}
\label{rmk:d.delta}
The derivative of a Dirac delta function is defined as follows:
\begin{align}
\int_\Rb \dd \om \, f(\om) \d_\om^n \del(\om-\om')
	&=	\int_\Rb \dd \om \, \del(\om-\om') (-\d_\om)^n  f(\om) 
	=		(-\d_{\om'})^n  f(\om') , \label{eq:d.delta}
\end{align}
where we have integrated by parts.
\end{remark}

Note from \eqref{eq:An0}, \eqref{eq:An1} and \eqref{eq:gn} that the operators $(\Ac_{n,k})$ are sums of operators of the form \eqref{eq:B}.  Thus, in light of Lemma \ref{lem:bracket.eu} and Remark \ref{rmk:d.delta}, we see that the integrals in \eqref{eq:A.eu} with respect to $\om_i$ and $\gam_i$, for $i = 1, 2, \ldots j$ collapse due to the Dirac delta functions.  The integral with respect to $\gam_{j+1}$ also collapses, due to the fact that the payoff function $\varphi$ does not depend on $y$.  And the iterated integrals with respect to $s_i$ for $i = 1, 2, \ldots j$ involve only exponentials and can always be computed explicitly.  Thus, what remains is the integral with respect to $\om_{j+1}$, which, in general, must be computed numerically (if $\varphi(x) = x^n \ee^{px}$ for some $n \in \Zb_+$ and $p \in \Rb$, then the integral with respect to $\om_{j+1}$ can be evaluated analytically).


\subsection{Single-barrier claims}
\label{sec:sb}
In this section, we consider the case $I = (L,\infty)$, which corresponds to a single-barrier knock-out claim written on $X$ with a barrier $L<X_0$.  
The case $I = (-\infty,U)$ with $U > X_0$ can be handled analogously.  We begin with the following lemma.

\begin{lemma}
\label{lem:sb}
Let $\Hc$ be the following linear operator
\begin{align}
\Hc
	&=	b \d_x + a \d_x^2 , &
\textup{dom}(\Hc)
	&=	\{ f \in C^2(I) : \lim_{x \to L} f(x) = 0 \} , &
I
	&=	(L,\infty).
\end{align}
The following holds
\begin{align}
\Hc \eta_\om
	&=	\mu_\om \eta_\om , &
\om
	&\in \Rb_+, \\
\eta_\om(x;b,a)
	&=	\sqrt{\frac{2}{\pi}}\ee^{-b x/(2a)} \sin \Big( \om(x-L) \Big) , &
\mu_\om(b,a)
	&=	- \frac{b^2}{4a} - a \om^2. \label{eq:eta.mu}
\end{align}
Moreover, we have
\begin{align}
\< \eta_\om, \eta_\gam \>
	&=	\del(\om - \gam) , &
\< f, g \>
	&:=	\int_I \dd x \, f(x) g(x) \mf(x) , &
\mf(x)
	&=	\ee^{bx/a} . \label{eq:m}
\end{align}
Here, $\del(\om-\gam)$ is a Dirac delta function.
\end{lemma}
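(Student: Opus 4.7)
My plan is to verify the two assertions by direct computation, following the same template as Lemma \ref{lem:eu} but with two new ingredients specific to the half-line setting: a drift term $b\d_x$ that I will neutralize by an exponential conjugation, and the weight $\mf$ which is chosen precisely so that $\Hc$ becomes symmetric in the weighted inner product (it is the speed measure of the diffusion associated with $\Hc$).

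For the eigenvalue equation, the first step is to write $\eta_\om(x) = \sqrt{2/\pi}\, \ee^{-bx/(2a)} \phi_\om(x)$ and compute $\Hc \eta_\om$ by the product rule. A short calculation shows that the conjugation eliminates both the drift $b\phi'$ and the first-derivative cross term coming from $a\,\d_x^2$, leaving $\Hc[\ee^{-bx/(2a)}\phi] = \ee^{-bx/(2a)}\bigl(a\,\phi'' - \tfrac{b^2}{4a}\phi\bigr)$. Taking $\phi_\om(x) = \sin(\om(x-L))$, so that $\phi_\om'' = -\om^2 \phi_\om$, makes the right-hand side equal to $\mu_\om \eta_\om$ with $\mu_\om = -b^2/(4a) - a\om^2$. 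The Dirichlet boundary condition at $L$ holds automatically because $\sin(0) = 0$, so $\eta_\om \in \textup{dom}(\Hc)$.

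For the orthogonality, the key observation is that $\mf(x) = \ee^{bx/a}$ cancels exactly the two $\ee^{-bx/(2a)}$ prefactors in $\eta_\om(x)\eta_\gam(x)$, so that after the substitution $u = x - L$ the inner product collapses to $\tfrac{2}{\pi}\int_0^\infty \sin(\om u)\sin(\gam u)\,\dd u$. The claimed delta function then follows from the completeness of the Fourier sine basis on $\Rb_+$; if one prefers not to cite this directly, it can be rederived by extending the integrand as an odd function to all of $\Rb$ and invoking the full-line identity $\tfrac{1}{2\pi}\int_\Rb \ee^{\ii(\om-\gam)u}\,\dd u = \del(\om-\gam)$ from the proof of Proposition \ref{prp:P00.eu}.

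The main obstacle is essentially conceptual rather than technical: one must recognize that the ansatz, the eigenvalue, and the weight $\mf$ are jointly dictated by the requirement that $\Hc$ become a self-adjoint Sturm--Liouville operator on $(L,\infty)$ with Dirichlet boundary at $L$, after which both claims reduce to routine verification. All identities are to be interpreted distributionally, exactly as in Lemma \ref{lem:eu}, and no convergence issues beyond those already handled there arise.
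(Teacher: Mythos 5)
Your proof is correct and is essentially the paper's proof, which is stated as a one-line ``by direct computation''; the conjugation $\eta_\om = \sqrt{2/\pi}\,\ee^{-bx/(2a)}\phi_\om$ neutralizing the drift, the Dirichlet condition holding via $\sin(0)=0$, and the weight $\mf$ cancelling the exponential prefactors so that $\langle\eta_\om,\eta_\gam\rangle = \tfrac{2}{\pi}\int_0^\infty\sin(\om u)\sin(\gam u)\,\dd u = \del(\om-\gam)$ for $\om,\gam>0$ are all exactly right.

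One small slip in the final aside: the integrand $\sin(\om u)\sin(\gam u)$ is \emph{even} in $u$, not odd, so the reduction to the full-line identity should go via the even extension, $\int_0^\infty = \tfrac12\int_\Rb$, followed by the product-to-sum identity $\sin(\om u)\sin(\gam u) = \tfrac12\bigl(\cos((\om-\gam)u)-\cos((\om+\gam)u)\bigr)$ and $\int_\Rb \cos(ku)\,\dd u = 2\pi\del(k)$, with $\del(\om+\gam)=0$ since $\om,\gam>0$. This does not affect your main argument, which correctly cites the completeness of the Fourier sine transform on $\Rb_+$ directly.
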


\begin{proof}
The lemma can be checked by direct computation.
\end{proof}

\begin{proposition}\label{prp:P00.sb}
Let $\Pc_{0,0}$ the semigroup generated by $\Ac_{0,0}$ with 
\begin{align}
\textup{dom}(\Ac_{0,0}) 
	&= \{ f \in C^2(E) : \lim_{x \to L}f(x,y)=0 \} , &
E 
	&= (L,\infty) \times \Rb .
\end{align}
Then we have
\begin{align}
\Pc_{0,0}(t,T) f
	&=	\int_{\Rb_+}  \dd \om \int_\Rb \dd \gam \, \ee^{\Lam_{\om,\gam}(T-t)} \< \Psi_{\om,\gam}, f \> \Psi_{\om,\gam} , \label{eq:P00.sb} \\
\< f, g \>
	&:=	\int_I \dd x \int_\Rb \dd y \, \fb(x,y) g(x,y) m(x), &
I
	&=	(L,\infty) ,
\end{align}
where $\Psi_{\om,\gam}$ and $\Lam_{\om,\gam}$ are given by 
\begin{align}
\Psi_{\om,\gam} (x,y)
	&=	\eta_\om(x;b_1,a_1) \psi_\gam(y) , &
\Lam_{\om,\gam}
	&=	\mu_\om(b_1,a_1) + \lam_\gam(b_2,a_2) , \\
(b_1, a_1)
	&=	\(\mu_0,(\thalf \sig^2)_0\), &
(b_2, a_2)
	&=	\(c_0,(\thalf g^2)_0\) ,
\end{align}
with $\psi_\gam$ and $\lam_\gam$ as defined in \eqref{eq:psi.lam}, $\eta_\om$ and $\gam_\om$ as defined in \eqref{eq:eta.mu} and $m$ as defined in \eqref{eq:m}.
\end{proposition}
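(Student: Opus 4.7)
The plan is to mirror the argument used in Proposition \ref{prp:P00.eu}, adapted to the half-line via Lemma \ref{lem:sb}. The key observation is that $\Ac_{0,0}$ splits additively as $\Ac_{0,0} = \Hc_1 + \Hc_2$, where $\Hc_1 := \mu_0\,\d_x + (\thalf\sig^2)_0\,\d_x^2$ acts only in $x$ on $(L,\infty)$ with Dirichlet condition at $L$, and $\Hc_2 := c_0\,\d_y + (\thalf g^2)_0\,\d_y^2$ acts only in $y$ on $\Rb$. Since these pieces commute, joint eigenfunctions are tensor products: Lemma \ref{lem:sb} gives $\Hc_1 \eta_\om = \mu_\om \eta_\om$ and Lemma \ref{lem:eu} gives $\Hc_2 \psi_\gam = \lam_\gam \psi_\gam$, whence $\Ac_{0,0}\Psi_{\om,\gam} = \Lam_{\om,\gam}\Psi_{\om,\gam}$, and the vanishing of $\sin(\om(x-L))$ at $x=L$ places $\Psi_{\om,\gam}$ inside $\textup{dom}(\Ac_{0,0})$.

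Next, I would propose the candidate fundamental solution
\begin{align}
\Gam_{0,0}(t,x,y;T,\xi,\eta)
	:= \int_{\Rb_+}\dd\om \int_\Rb \dd\gam \, \ee^{\Lam_{\om,\gam}(T-t)}\Psi_{\om,\gam}(x,y)\,\overline{\Psi}_{\om,\gam}(\xi,\eta)\,\mf(\xi),
\end{align}
and verify \eqref{eq:Gam00.pde}. The PDE $(\d_t + \Ac_{0,0})\Gam_{0,0} = 0$ follows by differentiating under the integral and invoking the eigenvalue relation; the Dirichlet condition at $x=L$ is inherited from $\eta_\om$; and absolute convergence for $t<T$ is secured by the negative real parts of $\Lam_{\om,\gam}$ for large $|\om|,|\gam|$, which also legitimizes interchanging integration and differentiation. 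For the terminal condition at $t=T$, the $\gam$-integral gives $\del(y-\eta)$ by standard Fourier inversion, and the $\om$-integral reduces to
\begin{align}
\int_0^\infty \dd\om \, \eta_\om(x;b_1,a_1)\,\eta_\om(\xi;b_1,a_1)\,\mf(\xi)
	= \ee^{-b_1(x-\xi)/(2a_1)} \int_0^\infty \dd\om\,\tfrac{2}{\pi}\sin\big(\om(x-L)\big)\sin\big(\om(\xi-L)\big),
\end{align}
whose inner integral equals $\del(x-\xi)$ by classical Fourier sine completeness on $(0,\infty)$, and the exponential prefactor equals $1$ on the support of the delta. Substituting $\Gam_{0,0}$ into \eqref{eq:P00} and grouping the $(\xi,\eta)$-integration with the weight $\mf(\xi)$ into the weighted inner product then yields \eqref{eq:P00.sb}.

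The main obstacle is justifying the sine-transform resolution of identity on $(L,\infty)$ with the non-Lebesgue speed measure $\mf(x)\dd x$. Operationally this reduces to the classical sine completeness above, once one recognizes (as in Lemma \ref{lem:sb}) that conjugation by $\ee^{-b_1 x/(2a_1)}$ turns $\Hc_1$ into $a_1\d_x^2 - b_1^2/(4a_1)$, a shifted Dirichlet Laplacian on $(L,\infty)$ whose spectral decomposition is precisely the Fourier sine transform in $x-L$. The rest of the proof is routine bookkeeping carried over from Proposition \ref{prp:P00.eu}.
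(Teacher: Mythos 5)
Your proposal follows the same approach as the paper: define the candidate kernel $\Gam_{0,0}$ as a spectral integral over the tensor-product eigenfunctions $\Psi_{\om,\gam} = \eta_\om\,\psi_\gam$, verify the PDE via the eigenvalue relation and the terminal condition via Fourier-sine/Fourier completeness against the speed measure $\mf$, and then insert into \eqref{eq:P00}. The paper compresses all of this into ``check by direct computation''; your write-up simply supplies those details (and correctly places the complex conjugate on $\Psi_{\om,\gam}(\xi,\eta)$, matching the conjugate in the weighted inner product).
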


\begin{proof}
Using Lemma \ref{lem:sb}, we check by direct computation that
\begin{align}
\Gam_{0,0}(t,x,y;T;\xi,\eta)
	&:=	\int_{\Rb_+}\dd \om \int_\Rb \dd \gam \, \ee^{\Lam_{\om,\gam}(T-t)} \Psi_{\om,\gam}(x,y) 
		\Psi_{\om,\gam}(\xi,\eta) \mf(\xi) , \label{eq:Gam00.sb}
\end{align}
satisfies \eqref{eq:Gam00.pde} and is therefore the fundamental solution of $(\d_t + \Ac_{0,0})$.  Expression \eqref{eq:P00.sb} follows directly by inserting \eqref{eq:Gam00.sb} into \eqref{eq:P00}.
\end{proof}

Once again, to compute the $(u_{n,k})$ , we must examine terms of the form \eqref{eq:Aset}.  Using \eqref{eq:P00.sb}, we write these terms as
\begin{align}
A
	&=	\int_\Rb \dd \om_{j+1} \int_\Rb \dd \gam_{j+1} \( \prod_{i=1}^j \int_{s_{i-1}}^T \dd s_i \int_\Rb \dd \om_i \int_\Rb \dd \gam_i \,
		\ee^{\Lam_{\om_i,\gam_i}(s_i-s_{i-1})} \< \Psi_{\om_i,\gam_i} , \Ac_{n_i,k_i} \Psi_{\om_{i+1},\gam_{i+1}} \> \) \\ &\quad
		\< \Psi_{\om_{j+1},\gam_{j+1}} , \varphi \> \ee^{\Lam_{\om_{j+1},\gam_{j+1}}(T-s_j)} \Psi_{\om_1,\gam_1} , \label{eq:A.sb}
\end{align}
where $\Psi_{\om_i,\gam_i}$ and $\Lambda_{\om_i,\gam_i}$ are as in Proposition \ref{prp:P00.sb}. 
Noting that each $\Ac_{n,k}$ can be expressed as a sum of operators with the form of $\Bc$, which is defined  in \eqref{eq:B}, we must compute inner products of the form
\begin{align}
\< \Psi_{\om',\gam'}, \Bc \Psi_{\om,\gam} \>. 
\end{align}
This motivates the following lemma.

\begin{lemma}
\label{lem:ip.sb}
Let $\Psi_{\om,\gam}$ and $\< \cdot, \cdot \>$ be as defined in Proposition \ref{prp:P00.sb} and $\Bc$ be defined as 
in \eqref{eq:B}. Then 
\begin{align}
\< \Psi_{\om',\gam'}, \Bc \Psi_{\om,\gam} \>
	&= (\ii \gam)^l  (-\ii \d_\gam)^j \del(\gam-\gam') C_{\om',\om,i,k}, 
&C_{\om',\om,i,k}
	&=
	\sum_{m=0}^i \binom{i}{m} L^{i-m} C_{\om',\om,m,k}^{(1)}, \label{eq:ip.sb}
\end{align}
where
\begin{align}
C_{\om',\om,m,k}^{(1)}
	&= \frac{m!}{\pi}c_{\om,k}^{(e)} \( (\om + \om')^{-m-1} - |\om - \om'|^{-m-1}\)
		)\sin\( \frac{m\pi}{2}\) \\ &\quad
	+ \frac{m!}{\pi}c_{\om,k}^{(o)}  \( (\om + \om')^{-m-1} - \operatorname{sign}(\om - \om') |\om - \om'|^{-m-1}\) 
		\cos\(\frac{m\pi}{2}\),   \label{eq:C.sb}\\
c_{\om,k}^{(o)} 
	&=  
		 \sum_{m=1}^{\lfloor \frac{k+1} {2}\rfloor} \binom{k}{2m -1} (-1)^{k - m} 
		\(\frac{b}{2 a} \)^{k-2m+1} \om^{2m -1}, \label{eq:sb.codd}\\
c_{\om,k}^{(e)}
	&=  \sum_{m=0}^{\lfloor \frac{k}{2} \rfloor} \binom{k}{2m} (-1)^{k-m}
		\( \frac{b}{2 a} \)^{k - 2m} \om^{2m}.\label{eq:sb.ceven}
\end{align}
\end{lemma}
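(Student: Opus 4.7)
The plan is a direct computation built on three structural facts: the basis function factorizes as $\Psi_{\om,\gam}(x,y) = \eta_\om(x;b_1,a_1)\psi_\gam(y)$, the test operator splits as $\Bc = (x^i\d_x^k)(y^j\d_y^l)$, and the weight $\mf(x)$ depends only on $x$. Together these let me factor $\<\Psi_{\om',\gam'},\Bc\Psi_{\om,\gam}\>$ into an $x$-integral times a $y$-integral. The $y$-part involves only the Fourier basis $\psi_\gam$ and is identical to the computation carried out in Lemma~\ref{lem:bracket.eu}; it produces exactly the factor $(\ii\gam)^l(-\ii\d_\gam)^j\del(\gam-\gam')$ appearing in \eqref{eq:ip.sb}. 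What remains is to identify the $x$-piece with $C_{\om',\om,i,k}$.

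After substituting $u = x-L$ the constants $\ee^{\pm bL/(2a)}$ arising from $\eta_\om$ and from $\eta_{\om'}\mf$ cancel, and the binomial expansion $x^i = \sum_{m=0}^i \binom{i}{m}L^{i-m}u^m$ immediately gives the outer sum $C_{\om',\om,i,k} = \sum_m \binom{i}{m}L^{i-m}C^{(1)}_{\om',\om,m,k}$ with
\begin{align}
C^{(1)}_{\om',\om,m,k} &= \frac{2}{\pi}\int_0^\infty \ee^{bu/(2a)}\sin(\om' u)\,u^m\,\d_u^k\bigl[\ee^{-bu/(2a)}\sin(\om u)\bigr]\dd u.
\end{align}
To evaluate the $k$th derivative I would write $\ee^{-bu/(2a)}\sin(\om u) = (2\ii)^{-1}(\ee^{\alpha_+ u} - \ee^{\alpha_- u})$ with $\alpha_\pm = -b/(2a)\pm\ii\om$, differentiate $k$ times, and use $\alpha_- = \overline{\alpha_+}$ to collapse the result to $\ee^{-bu/(2a)}[\Re(\alpha_+^k)\sin(\om u) + \Im(\alpha_+^k)\cos(\om u)]$. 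Expanding $\alpha_+^k$ by the binomial theorem and separating the real (even $n$) and imaginary (odd $n$) terms is exactly the combinatorics that produces the coefficients $c^{(e)}_{\om,k}$ and $c^{(o)}_{\om,k}$ of \eqref{eq:sb.codd}--\eqref{eq:sb.ceven}. The surviving $\ee^{\pm bu/(2a)}$ factors then cancel, reducing $C^{(1)}_{\om',\om,m,k}$ to a $c^{(e)}_{\om,k}$-multiple of $\int_0^\infty u^m\sin(\om' u)\sin(\om u)\dd u$ plus a $c^{(o)}_{\om,k}$-multiple of $\int_0^\infty u^m\sin(\om' u)\cos(\om u)\dd u$, and product-to-sum reduces both to moments of $\cos(\beta u)$ and $\sin(\beta u)$ with $\beta \in \{\om-\om',\,\om+\om'\}$.

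The main obstacle is that these moment integrals are not absolutely convergent for $m\geq 0$ and must be read distributionally. I would use the Abel regularization $\int_0^\infty u^m \ee^{-(\eps-\ii\beta)u}\dd u = m!/(\eps-\ii\beta)^{m+1}$ and take real and imaginary parts as $\eps\searrow 0$, which yields
\begin{align}
\int_0^\infty u^m \cos(\beta u)\dd u &= -\frac{m!\sin(m\pi/2)}{|\beta|^{m+1}}, &
\int_0^\infty u^m \sin(\beta u)\dd u &= \frac{m!\cos(m\pi/2)\operatorname{sign}(\beta)}{|\beta|^{m+1}}
\end{align}
as tempered distributions in $\beta$. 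Substituting $\beta = \om\pm\om'$ (noting $\om,\om' > 0$ so $\om+\om' > 0$) and collecting reproduces \eqref{eq:C.sb} verbatim. As with the orthogonality relation $\<\eta_\om,\eta_\gam\> = \del(\om-\gam)$ from Lemma~\ref{lem:sb}, the identity is understood distributionally on $\Rb_+\times\Rb_+$, which is exactly the sense in which it is then consumed when substituting back into the spectral representation \eqref{eq:A.sb}.
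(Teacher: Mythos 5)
Your proof is correct and follows essentially the same route as the paper: factor the inner product into $x$- and $y$-pieces, let the $y$-piece reproduce the Fourier delta from Lemma~\ref{lem:bracket.eu}, shift $u = x - L$ and binomially expand $x^i$ to isolate $C^{(1)}_{\om',\om,m,k}$, and reduce to the moments $\int_0^\infty u^m\sin(\om'u)\{\sin,\cos\}(\om u)\,\dd u$. The paper simply asserts \eqref{eq:etak.sb} and \eqref{eq:C.sb} ``by direct computation,'' whereas you supply the complex-exponential derivation of $\d_x^k\eta_\om$ and the Abel regularization that gives the divergent moment integrals their distributional meaning --- fuller detail, but the same argument.
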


\begin{proof}
Recalling the definition of $\Psi$ from Proposition \ref{prp:P00.eu}, we compute
\begin{align}
\<  \Psi_{\om', \gam'}, \Bc \Psi_{\om,\gam}\> 
	&= \int_I \dd x \int_\Rb \dd y\ \mf(x) \overline{\Psi}_{\om',\gam'}(x,y)	
		x^i y^j \d_x^k \d_y^l \Psi_{\om,\gam}(x,y)  \\
	&= (\ii \gam)^l  (-\ii \d_\gam)^j \del(\gam-\gam') 
		\int_{I} \dd x\ \mf(x) x^i \eta_{\om'}(x) \d_x^k \eta_\om(x). \label{eq:xint.sb}
\end{align}
By direct computation, we find that 
\begin{align}
\d_x^k \eta_\om (x)
	= \sqrt{\frac{2}{\pi}} \ee^{-\frac{b }{2 a} x}
		\(
		c_{\om,k}^{(o)} \cos \( \om ( x - L )\) 
		+ c_{\om,k}^{(e)} \sin \(\om ( x -L ) \) 
		\), \label{eq:etak.sb}
\end{align}
where $c_{\om,k}^{(o)}$ and $c_{\om,k}^{(e)}$ are given by \eqref{eq:sb.codd} and \eqref{eq:sb.ceven}, respectively. Thus,
\begin{align}
\int_{I} \dd x\ \mf(x) x^i \eta_{\om'}(x) \d_x^k \eta_\om(x)
	&= \frac{2}{\pi} \int_I \dd x\ x^i 
		\sin\( \om' ( x - L ) \) 
		\( c_{\om,k}^{(o)} \cos\( \om (x - L ) \)  
		 + c_{\om,k}^{(e)} \sin\( \om (x - L ) \)
		 \) \\
	&=
\sum_{m=0}^i \binom{i}{m} L^{i-m}
\frac{2}{\pi} \int_0^\infty   \dd x\ x^m \sin(\om' x) \( c_{\om,k}^{(o)} \cos(\om x) + c_{\om,k}^{(e)} \sin(\om x) \) \\
	&= \sum_{m=0}^i \binom{i}{m} L^{i-m} C_{\om',\om,m,k}^{(1)} , \label{eq:matt}
\end{align}
where $C_{\om',\om,m,k}^{(1)}$ is given by \eqref{eq:C.sb}.
Inserting \eqref{eq:matt} into \eqref{eq:xint.sb} gives \eqref{eq:ip.sb}.
\end{proof}

Note from \eqref{eq:An0}, \eqref{eq:An1} and \eqref{eq:gn}, the operators $\Ac_{n,k}$ are sums of operators of the form \eqref{eq:B}. We see from \eqref{eq:ip.sb} that the integrals with respect to $\gamma_i$, $i = 1,2,3,\cdots,j$ in \eqref{eq:A.sb} collapse due to the Dirac delta functions. As $\varphi$ is independent of $y$, the integral with respect to $\gam_{j+1}$ also collapses. Furthermore, the iterated integrals with respect to $s_i$, $i= 1,2,3,\cdots j$ in \eqref{eq:A.sb} involve only exponentials in $s_i$ and can therefore be evaluated explicitly. We are left only with integrals with respect to $\om_{i}$, $i =1,2,3,\cdots,j+1$, which can be evaluated numerically.


\subsection{Double-barrier claims}
\label{sec:db}
In this section, we consider the case $I = (L,U)$, which corresponds to a double-barrier knock-out claim written on $X$ with a barriers $L$ and $U$ satisfying $L<X_0<U$.  We begin with the following lemma.

\begin{lemma}
\label{lem:db}
Let $\Hc$ be the following linear operator
\begin{align}
\Hc
	&=	b \d_x + a \d_x^2 , &
\textup{dom}(\Hc)
	&=	\{ f \in C^2(I) : \lim_{x \to L} f(x) = 0, \lim_{x \to U} f(x) = 0 \} , &
I
	&=	(L,U).
\end{align}
The following holds
\begin{align}
\Hc \phi_\ell
	&=	\nu_\ell \phi_\ell , &
\ell
	&\in \Nb \\
\phi_\ell(x;b,a)
	&= \sqrt{\frac{2}{U-L}} \ee^{ -\frac{b x}{2 a}} \sin \left(\frac{\pi  \ell (x-L)}{U-L}\right), &
\nu_\ell(b,a)
	&=	-\frac{b^2}{4 a}-\frac{a \pi ^2  \ell^2}{(U-L)^2}. \label{eq:phi.nu}
\end{align}
Moreover, we have
\begin{align}
\< \phi_\ell,\phi_k \>
	&=	\del_{\ell,k} , &
\< f, g \>
	&:=	\int_I \dd x \, \mf(x) f(x) g(x) . 
\end{align}
Here, $\del_{\ell,k}$ is a Kronecker delta function and $\mf$ is given by \eqref{eq:m}.
\end{lemma}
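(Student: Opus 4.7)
The plan is to verify the three assertions by direct computation, following the template of Lemmas~\ref{lem:eu} and~\ref{lem:sb}. The guiding observation is that the substitution $\phi = \ee^{-bx/(2a)}\psi$ removes the first-order drift in $\Hc$: one finds $\Hc\phi = \ee^{-bx/(2a)}\bigl(a\d_x^2 - \tfrac{b^2}{4a}\bigr)\psi$, so Dirichlet eigenfunctions of $\Hc$ on $(L,U)$ correspond to Dirichlet eigenfunctions of the pure second-derivative operator $a\d_x^2$. The latter are the standard Fourier sines $\sqrt{2/(U-L)}\sin(\pi\ell(x-L)/(U-L))$ with eigenvalues $-a\pi^2\ell^2/(U-L)^2$, which recovers the formulas \eqref{eq:phi.nu}. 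Dually, the weight $\mf(x) = \ee^{bx/a}$ is precisely what is needed to make $\Hc$ formally symmetric on $L^2(I,\mf\,\dd x)$, which explains why this measure appears in the inner product.

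For the eigenvalue relation $\Hc\phi_\ell = \nu_\ell\phi_\ell$, I would compute $\d_x\phi_\ell$ and $\d_x^2\phi_\ell$ explicitly and then take the combination $b\d_x\phi_\ell + a\d_x^2\phi_\ell$. Writing $\alpha_\ell := \pi\ell/(U-L)$, the cosine contributions from the Leibniz cross terms cancel exactly (the $b$ coefficient produces $+b\alpha_\ell \cos(\cdot)$ while the $a$ coefficient produces $-2a\cdot\tfrac{b}{2a}\alpha_\ell\cos(\cdot)$), while the sine contributions assemble into $\bigl(-b^2/(4a) - a\alpha_\ell^2\bigr)\phi_\ell$, as claimed. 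The Dirichlet boundary conditions $\phi_\ell(L) = \phi_\ell(U) = 0$ are immediate from $\sin(0) = \sin(\pi\ell) = 0$ for $\ell\in\Nb$.

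For orthonormality, I would compute $\langle\phi_\ell,\phi_k\rangle$ and use the crucial fact that the weight $\mf(x) = \ee^{bx/a}$ cancels the two factors of $\ee^{-bx/(2a)}$ coming from $\phi_\ell$ and $\phi_k$, leaving
\begin{align}
\langle \phi_\ell,\phi_k\rangle
    &= \frac{2}{U-L}\int_L^U \sin\!\left(\tfrac{\pi\ell(x-L)}{U-L}\right)\sin\!\left(\tfrac{\pi k(x-L)}{U-L}\right)\dd x,
\end{align}
which, after the substitution $u = (x-L)/(U-L)$, reduces to $2\int_0^1\sin(\pi\ell u)\sin(\pi k u)\,\dd u = \delta_{\ell,k}$ by the standard Fourier sine orthogonality on the unit interval.

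There is no genuine obstacle in this lemma; all three verifications are short and essentially mechanical, which is consistent with the fact that the preceding lemmas in the paper are dispatched by a single line appealing to direct computation. The only conceptual point worth flagging is the role of the weight $\mf$: it is not an arbitrary feature of the inner product but is forced by the requirement that $\Hc$ be symmetric, and this symmetry is precisely what guarantees that the real eigenfunctions $\phi_\ell$ are mutually orthogonal with respect to $\mf\,\dd x$.
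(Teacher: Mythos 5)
Your proof is correct and matches the paper's approach: the paper simply asserts that the lemma can be checked by direct computation, and you have carried out exactly that computation (the gauge transformation $\phi=\ee^{-bx/(2a)}\psi$ reducing $\Hc$ to $a\d_x^2 - b^2/(4a)$, the Leibniz cancellation of the cosine terms, and the cancellation of the exponentials by the weight $\mf$ yielding standard sine orthogonality). The additional remark on why $\mf$ is forced by self-adjointness is a correct and welcome conceptual aside, but does not represent a different route.
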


\begin{proof}
The lemma can be checked by direct computation.
\end{proof}

\begin{proposition}\label{prp:P00.db}
Let $\Pc_{0,0}$ the semigroup generated by $\Ac_{0,0}$ with 
\begin{align}
\textup{dom}(\Ac_{0,0}) 
	&= \{ f \in C^2(E) : \lim_{x \to L}f(x,y)=0 , \lim_{x \to U} f(x,y) = 0 \} , &
E 
	&= (L,R) \times \Rb .
\end{align}
Then we have
\begin{align}
\Pc_{0,0}(t,T) f
	&=	\sum_{\ell=1}^\infty  \int_\Rb \dd \gam \, \ee^{\Lam_{\ell,\gam}(T-t)} \< \Psi_{\ell,\gam}, f \> 
\Psi_{\ell,\gam} , 
\label{eq:P00.db} \\
\< f, g \>
	&:=	\int_I \dd x  \int_\Rb \dd y \,  \fb(x,y) g(x,y) \mf(x) , &
I
	&=	(L,U) ,
\end{align}
where $\Psi_{\ell,\gam}$ and $\Lam_{\ell,\gam}$ are given by
\begin{align}
\Psi_{\ell,\gam} (x,y)
	&=	\phi_\ell(x;b_1,a_1) \psi_\gam(y) , &
\Lam_{\ell,\gam}
	&=	\nu_\ell(b_1,a_1) + \lam_\gam(b_2,a_2), \\
(b_1, a_1)
	&=	\(\mu_0,(\thalf\sig^2)_0\), &
(b_2, a_2)
	&=	\(c_0,(\thalf g^2)_0\) ,
\end{align}
and $\psi_\gam$ and $\lam_\gam(b,a)$ are given in \eqref{eq:psi.lam}, $\phi_\ell$ and $\nu_\ell$ are given in 
\eqref{eq:phi.nu} and $\mf$ is given in \eqref{eq:m}.
\end{proposition}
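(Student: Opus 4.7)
My plan is to follow exactly the template of the proofs of Propositions~\ref{prp:P00.eu} and~\ref{prp:P00.sb}: I will propose a candidate fundamental solution as a bilinear eigenfunction expansion, verify by direct computation that it satisfies the Kolmogorov problem~\eqref{eq:Gam00.pde}, and then substitute into~\eqref{eq:P00} to read off the claimed semigroup formula. The key structural observation is that $\Ac_{0,0}$ contains no cross-derivative term (the $\rho$-weighted $\d_x \d_y$ piece sits in $\Ac_{n,1}$, not $\Ac_{n,0}$), so $\Ac_{0,0}$ separates as the sum of an $x$-operator with coefficients $(b_1,a_1)=(\mu_0,(\tfrac12\sig^2)_0)$ acting on the bounded interval $I=(L,U)$ with Dirichlet conditions, and a $y$-operator with coefficients $(b_2,a_2)=(c_0,(\tfrac12 g^2)_0)$ acting on $\Rb$.

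Concretely, my candidate is
\begin{align}
\Gam_{0,0}(t,x,y;T,\xi,\eta)
	&:=	\sum_{\ell=1}^{\infty} \int_\Rb \dd \gam \, \ee^{\Lam_{\ell,\gam}(T-t)} \Psi_{\ell,\gam}(x,y) \Psi_{\ell,\gam}(\xi,\eta) \mf(\xi) ,
\end{align}
with $\Psi_{\ell,\gam}$ and $\Lam_{\ell,\gam}$ as in the statement. Because $\Ac_{0,0}$ acts only in $(x,y)$, I apply it termwise: Lemma~\ref{lem:db} gives $\bigl(b_1 \d_x + a_1 \d_x^2\bigr)\phi_\ell = \nu_\ell \phi_\ell$, and Lemma~\ref{lem:eu} gives $\bigl(b_2 \d_y + a_2 \d_y^2\bigr)\psi_\gam = \lam_\gam \psi_\gam$. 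Hence $\Ac_{0,0}\Psi_{\ell,\gam} = \Lam_{\ell,\gam}\Psi_{\ell,\gam}$, which together with $\d_t \ee^{\Lam_{\ell,\gam}(T-t)} = -\Lam_{\ell,\gam}\ee^{\Lam_{\ell,\gam}(T-t)}$ shows that $(\d_t + \Ac_{0,0})\Gam_{0,0} = 0$ inside $E$. The Dirichlet boundary conditions $\lim_{x \to L}\Gam_{0,0} = \lim_{x \to U}\Gam_{0,0} = 0$ are inherited directly from $\phi_\ell$, so $\Gam_{0,0}(\cdot,\cdot,\cdot;T,\xi,\eta) \in \operatorname{dom}(\Ac_{0,0})$.

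For the terminal condition, I need $\Gam_{0,0}(T,x,y;T,\xi,\eta) = \del_{\xi,\eta}(x,y) := \del(x-\xi)\del(y-\eta)$. Setting $t=T$ makes the exponential equal to one, so the expression factors as
\begin{align}
\biggl(\sum_{\ell=1}^\infty \phi_\ell(x;b_1,a_1)\phi_\ell(\xi;b_1,a_1)\mf(\xi)\biggr)\biggl(\int_\Rb \dd\gam\, \psi_\gam(y)\psi_\gam(\eta)\biggr) .
\end{align}
The $\gam$-integral equals $\del(y-\eta)$ as used in the European case. The $\ell$-sum is the completeness relation for the orthonormal basis $\{\phi_\ell\}$ of the weighted space $L^2(I,\mf\,\dd x)$ guaranteed by Lemma~\ref{lem:db}, and this yields $\del(x-\xi)$ in the sense of distributions on $I$. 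Combining these gives the required delta-function initial data.

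Once $\Gam_{0,0}$ is identified as the fundamental solution, substituting it into~\eqref{eq:P00} and interchanging the order of summation/integration immediately produces the claimed semigroup formula~\eqref{eq:P00.db}. The only non-routine step I anticipate is the completeness identity for $\{\phi_\ell\}$ in $L^2(I,\mf\,\dd x)$; however, this is standard Sturm--Liouville theory on a bounded interval with Dirichlet endpoints and a strictly positive smooth weight $\mf$, so it can be quoted without further work. All remaining estimates (termwise differentiation, dominated convergence for the interchange of sum and integral) are justified by the exponential decay in $(T-t)$ coming from $\nu_\ell = -b_1^2/(4a_1) - a_1 \pi^2 \ell^2/(U-L)^2 \to -\infty$ as $\ell \to \infty$, which provides strong convergence for any $t < T$.
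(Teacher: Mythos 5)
Your proposal is correct and takes exactly the same route as the paper: the paper's proof is the one-line statement that $\Gam_{0,0}$ as you define it is checked by direct computation (via Lemma~\ref{lem:db}) to satisfy \eqref{eq:Gam00.pde}, after which \eqref{eq:P00.db} follows by inserting into \eqref{eq:P00}. You have simply spelled out the steps that the paper compresses into ``direct computation'': termwise application of the separable operator $\Ac_{0,0}$, inheritance of the Dirichlet boundary conditions from $\phi_\ell$, and the terminal-time delta via Fourier inversion in $\gam$ together with Sturm--Liouville completeness of $\{\phi_\ell\}$ in $L^2(I,\mf\,\dd x)$.
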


\begin{proof}
Using Lemma \ref{lem:db}, we check by direct computation that
\begin{align}
\Gam_{0,0}(t,x,y;T;\xi,\eta)
	&:=	\sum_{\ell=1}^\infty \int_\Rb \dd \gam \, \ee^{\Lam_{\ell,\gam}(T-t)} \Psi_{\ell,\gam}(x,y) 
\Psi_{\ell,\gam}(\xi,\eta) \mf(\xi) , \label{eq:Gam00.db}
\end{align}
satisfies \eqref{eq:Gam00.pde} and is therefore the fundamental solution of $(\d_t + \Ac_{0,0})$.  Expression \eqref{eq:P00.db} follows directly by inserting \eqref{eq:Gam00.db} into \eqref{eq:P00}.
\end{proof}

As with the European and single-barrier cases, to compute the functions $(u_{n,k})$ we must evaluate terms of the form \eqref{eq:Aset}. Using \eqref{eq:P00.db} we write these terms as
\begin{align}
A
	&=	\sum_{\ell_{j+1}=1}^\infty \int_\Rb \dd \gam_{j+1} \( \prod_{i=1}^j \int_{s_{i-1}}^T \dd s_i 
		\sum_{\ell_{i} =1}^\infty \int_\Rb \dd \gam_i \,
		\ee^{\Lam_{\ell_i,\gam_i}(s_i-s_{i-1})} \< \Psi_{\ell_i,\gam_i} , \Ac_{n_i,k_i} 
\Psi_{\ell_{i+1},\gam_{i+1}} \> \) \\ &\quad
		\< \Psi_{\ell_{j+1},\gam_{j+1}} , \varphi \> \ee^{\Lam_{\ell_{j+1},\gam_{j+1}}(T-s_j)} 
\Psi_{\om_1,\gam_1} 
. \label{eq:A.db}
\end{align}
As each $\Ac_{n,k}$ is a sum of operators with the form of $\Bc$, which is defined in \eqref{eq:B}, we must compute terms of the form $\< \Psi_{\ell_i,\gam_i} , \Bc \Psi_{\ell_{i+1},\gam_{i+1}} \>$.

\begin{lemma}
\label{lem:ip.db}
Let $\Psi_{\ell,\gam}$ and $\< \cdot, \cdot \>$ be as defined in Proposition \ref{prp:P00.db} and $\Bc$ be as defined
in \eqref{eq:B}. Then
\begin{align}
\< \Psi_{\ell',\gam'}, \Bc \Psi_{\ell,\gam}\> 
	= (\ii \gam)^l  (-\ii \d_\gam)^j \del(\gam-\gam') C_{\ell',\ell,i,k} \label{eq:ip.db},
\end{align}
where
\begin{align}
C_{\ell',\ell,i,k}
	=
\sum_{m=0}^i \binom{i}{m} L^{i-m}\( \frac{U-L}{\pi}\)^{m+1}  \( C^{(1)}_{\ell',\ell,m,k} \Ib_{\ell' \neq \ell} + 
C^{(2)}_{\ell,m,k} \del_{\ell',\ell}\), \label{eq:c.db}
\end{align}
and
\begin{align}
C^{(1)}_{\ell',\ell,m,k} 
	&= 
	\frac{1}{4} \pi ^{m+\frac{3}{2}} \Gam_E \left(\frac{m+1}{2}\right) c_{\ell,k}^{(e)} 
		\, _1\widetilde{F}_2\left(\frac{m+1}{2};\frac{1}{2},\frac{m+3}{2};-\frac{1}{4} (\ell-\ell')^2 \pi ^2\right) \label{eq:db.c1}\\&\quad
	-\frac{1}{4} \pi ^{m+\frac{3}{2}} \Gam_E \left(\frac{m+1}{2}\right) c_{\ell,k}^{(e)} 
		\, _1\widetilde{F}_2\left(\frac{m+1}{2};\frac{1}{2},\frac{m+3}{2};-\frac{1}{4} (\ell+\ell')^2 \pi ^2\right)\\&\quad
	+\frac{\pi ^{m+2}}{2 (m+2)}\left(\ell'  c_{\ell,k}^{(o)}-\ell  c_{\ell,k}^{(o)}\right)
		\, _1F_2\left(\frac{m}{2}+1;\frac{3}{2},\frac{m}{2}+2;-\frac{1}{4} (\ell-\ell')^2 \pi ^2\right)  \\&\quad
	+\frac{\pi ^{m+2}}{2 (m+2)}\left(\ell'  c_{\ell,k}^{(o)}+\ell  c_{\ell,k}^{(o)}\right)
		\, _1F_2\left(\frac{m}{2}+1;\frac{3}{2},\frac{m}{2}+2;-\frac{1}{4} (\ell+\ell')^2 \pi ^2\right), \\
C^{(2)}_{\ell,m,k}
	&= 
\frac{\pi ^{m+2} \ell c_{l,k}^{\text{(o)}} }{m+2}\, _1F_2\left(\frac{m}{2}+1;\frac{3}{2},\frac{m}{2}+2;-\pi ^2 \ell^2\right) \label{eq:db.c2}\\ &\quad
-\frac{2 \pi ^{m+3} \ell^2 c_{l,k}^{\text{(e)}} }{m^2+4 m+3} \, _1F_2\left(\frac{m}{2}+\frac{3}{2};\frac{3}{2},\frac{m}{2}+\frac{5}{2};-\pi ^2 \ell^2\right) \\
c_{\ell,k}^{(o)}
	&=  
		\sqrt{\frac{2}{U-L}} \sum_{j=1}^{\lfloor \frac{k+1} {2}\rfloor} \binom{k}{2j -1} (-1)^{k - j} 
		\(\frac{b}{2 a} \)^{k-2j+1} \(\frac{\pi \ell}{U-L}\)^{2j -1}, \\
c_{\ell,k}^{(e)}
	&=\sqrt{\frac{2}{U-L}}  \sum_{j=0}^{\lfloor \frac{k}{2} \rfloor} \binom{k}{2j} (-1)^{k-j}
		\( \frac{b}{2 a} \)^{k - 2j} \(\frac{\pi \ell}{U-L}\)^{2j}.
\end{align}
Here, $\Gam_E$ is the Euler gamma function, and $_1 \mathrm{F}_2$ and $_1\widetilde{\mathrm{F}}_2$ are hypergeometric 
and regularized hypergeometric functions, respectively.
\end{lemma}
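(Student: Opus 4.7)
The plan is to mirror the strategy already used in the proofs of Lemmas \ref{lem:bracket.eu} and \ref{lem:ip.sb}: separate the action of $\Bc$ into its $x$- and $y$-parts, dispatch the $y$-integral exactly as before, and concentrate the work on a carefully reduced $x$-integral that depends on the bounded geometry of the interval $(L,U)$. Since $\Psi_{\ell,\gam}(x,y) = \phi_\ell(x;b_1,a_1)\psi_\gam(y)$ and $\Bc = x^iy^j\d_x^k\d_y^l$ factorises into an $x$-operator times a $y$-operator, I would begin by writing
\begin{align}
\<\Psi_{\ell',\gam'},\Bc\Psi_{\ell,\gam}\>
 = \(\int_I \dd x \, \mf(x) x^i \phi_{\ell'}(x)\, \d_x^k \phi_\ell(x) \)
   \(\int_\Rb \dd y \, \overline{\psi}_{\gam'}(y) y^j \d_y^l \psi_\gam(y) \),
\end{align}
and immediately invoke the computation in Lemma \ref{lem:bracket.eu} for the $y$-factor to obtain the prefactor $(\ii\gam)^l(-\ii\d_\gam)^j\del(\gam-\gam')$.

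Next I would compute $\d_x^k \phi_\ell$ directly from \eqref{eq:phi.nu} using the Leibniz rule applied to the product $e^{-bx/(2a)}\sin(\pi\ell(x-L)/(U-L))$; exactly as in \eqref{eq:etak.sb}, the result is a sum of a cosine and a sine term, with prefactor $e^{-bx/(2a)}$ and coefficients $c_{\ell,k}^{(o)},c_{\ell,k}^{(e)}$ obtained by separating the binomial sum according to the parity of the trigonometric derivative. Multiplying by $\mf(x)\phi_{\ell'}(x) = e^{bx/a}\sqrt{2/(U-L)}\,e^{-bx/(2a)}\sin(\pi\ell'(x-L)/(U-L))$, the exponential weights cancel exactly, leaving a purely trigonometric integrand against the polynomial $x^i$.

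Then I would substitute $x = L + \tfrac{U-L}{\pi}v$, $v\in(0,\pi)$, use the binomial expansion $x^i = \sum_{m=0}^i\binom{i}{m}L^{i-m}\bigl(\tfrac{U-L}{\pi}\bigr)^m v^m$ (this produces the explicit $((U-L)/\pi)^{m+1}$ factors appearing in \eqref{eq:c.db}), and apply the product-to-sum identities
\begin{align}
\sin(\ell' v)\sin(\ell v) &= \tfrac12[\cos((\ell-\ell')v)-\cos((\ell+\ell')v)], \\
\sin(\ell' v)\cos(\ell v) &= \tfrac12[\sin((\ell+\ell')v)+\sin((\ell'-\ell)v)].
\end{align}
This reduces everything to the two elementary integrals $\int_0^\pi v^m\cos(nv)\dd v$ and $\int_0^\pi v^m\sin(nv)\dd v$ for integer $n$, which can be expressed in closed form via the regularized $_1\widetilde{F}_2$ and $_1F_2$ functions that appear in the statement.

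The main obstacle, and the place that needs genuine care rather than being a routine manipulation, is the case split $\ell=\ell'$ versus $\ell\neq\ell'$: when $\ell=\ell'$, the term with argument $(\ell-\ell')v$ becomes $\cos 0=1$ or $\sin 0=0$, producing integrals of the form $\int_0^\pi v^m\dd v$ that collapse a hypergeometric contribution to a polynomial in $\pi$ and yield precisely $C^{(2)}_{\ell,m,k}$; in the generic case $\ell\neq\ell'$ both frequency channels survive and combine into $C^{(1)}_{\ell',\ell,m,k}$. I expect the bookkeeping in matching the hypergeometric parameters $(m+1)/2$ and $(m+2)/2$ with the correct frequencies $(\ell\pm\ell')^2\pi^2/4$, and in keeping track of the sign conventions inherited from the product-to-sum identities, to be the only nontrivial part of the argument; everything else is a direct computation.
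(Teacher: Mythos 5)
Your proposal follows the paper's proof essentially step for step: factor the inner product into its $x$ and $y$ pieces, handle the $y$-integral exactly as in Lemma \ref{lem:bracket.eu}, compute $\d_x^k\phi_\ell$ to split it into a cosine and sine term with coefficients $c_{\ell,k}^{(o)}$ and $c_{\ell,k}^{(e)}$, observe the exponential weight $\mf(x)$ cancels against the two $\ee^{-bx/(2a)}$ factors, change variables $x = L + \tfrac{U-L}{\pi}v$ with a binomial expansion of $x^i$ (which correctly accounts for the $((U-L)/\pi)^{m+1}$ factors, including the extra power from $\dd x$), and then reduce to $\int_0^\pi v^m\sin(\ell' v)\cos(\ell v)\,\dd v$ and $\int_0^\pi v^m\sin(\ell' v)\sin(\ell v)\,\dd v$, with the $\ell=\ell'$ versus $\ell\neq\ell'$ distinction producing $C^{(2)}$ and $C^{(1)}$ respectively. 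This is precisely the paper's computation; your use of explicit product-to-sum identities simply fills in a step the paper leaves implicit before quoting the $_1F_2$ and $_1\widetilde{F}_2$ expressions.
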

\begin{proof}
Recalling the definition of $\Psi$ from Proposition \ref{prp:P00.db}, we compute
\begin{align}
\<  \Psi_{\ell', \gam'}, \Bc \Psi_{\ell,\gam}\> 
	&= \int_I \dd x \int_\Rb \dd y\ \mf(x) \overline{\Psi}_{\ell',\gam'}(x,y)	
		x^i y^j \d_x^k \d_y^j \Psi_{\ell,\gam}(x,y)  \\
	&= (\ii \gam)^l  (-\ii \d_\gam)^j \del(\gam-\gam') 
		\int_{I} \dd x\ \mf(x) x^i \phi_{\ell'}(x) \d_x^k \phi_\ell(x) \label{eq:ipint.db}.
\end{align}
We see by direct computation that 
\begin{align}
\d_x^k  \phi_\ell(x; b,a)
	=  \ee^{-\frac{b }{2 a} x}
		\(
		c_{\ell,k}^{(o)} \cos \( \frac{ \ell \pi(x - L)}{U-L} \) 
		+ c_{\ell,k}^{(e)} \sin \( \frac{\ell \pi (x - L)}{U-L} \) 
		\) . \label{eq:phik.db}
\end{align}
Thus, we have
\begin{align}
&\int_{I} \dd x\ \mf(x) x^i \phi_{\ell'}(x) \d_x^k \phi_\ell(x) \\
	&= \int_I \dd x\ x^i \sin \( \frac{\ell' \pi (x-L)}{U-L} \) 
	\( c_{\ell,k}^{(o)} \cos\( \frac{\ell \pi (x-L)}{U-L} \) 
	  +c_{\ell,k}^{(e)} \sin\( \frac{\ell \pi (x-L)}{U-L} \) \) \\
	&=  
	\sum_{m=0}^i \binom{i}{m} L^{i-m}\( \frac{U-L}{\pi}\)^{m+1} 
		\int_0^\pi \dd x\ x^m \sin (\ell' x) 
		\( c_{\ell,k}^{(o)} \cos( \ell x) + c_{\ell,k}^{(e)} \sin ( \ell x) \) \\ 
	&= 
	\sum_{m=0}^i \binom{i}{m} L^{i-m}\( \frac{U-L}{\pi}\)^{m+1} 
		\( C^{(1)}_{\ell',\ell,m,k} \Ib_{\ell' \neq \ell} + 
			C^{(2)}_{\ell,m,k} \del_{\ell',\ell}\), \label{eq:db.intfinal}
\end{align}
where the formulas for $C^{(1)}_{\ell',\ell,m,k}$ and $C^{(2)}_{\ell',\ell,m,k}$ are given in \eqref{eq:db.c1} and \eqref{eq:db.c2}, respectively. Inserting \eqref{eq:db.intfinal} into \eqref{eq:ipint.db} yields \eqref{eq:ip.db}. 
\end{proof}

\begin{remark}
The functions $_1 F_2$ and $_1 \widetilde F_2$, which appear in the expression for $C_{\ell',\ell,i,k}$, arise from computing integrals of the form
$\int_0^\pi\dd x\, x^m \sin(\ell ' x) \cos(\ell x)$ and $\int_0^\pi\dd x\, x^m \sin(\ell ' x) \sin(\ell x)$.
For any $m,\ell,\ell' \in \Nb_0$, these integrals are equal to finite sums of terms containing powers of $x$, sines and cosines (as can be seen by integrating by parts).  Thus, the functions $_1 F_2$ and $_1 \widetilde F_2$ can be evaluated with minimal computational effort.
\end{remark}

Note from \eqref{eq:An0}, \eqref{eq:An1} and \eqref{eq:gn}, the operators $\Ac_{n,k}$ are sums of operators of the form \eqref{eq:B}. We see from \eqref{eq:ip.db} that the integrals with respect to $\gamma_i$, $i = 1,2,3,\cdots,j$ in \eqref{eq:A.db} collapse due to the Dirac delta functions. Since $\varphi$ is independent of $y$, the integral with respect to $\gam_{j+1}$ also collapses. Furthermore, the iterated integrals with respect to $s_i$, $i= 1,2,3,\cdots j$ in \eqref{eq:A.sb} involve only exponentials in $s_i$ and can therefore be evaluated explicitly. Thus, \eqref{eq:A.db} is an explicit sum and does not require any numerical integration.

\begin{remark}
The fundamental solution corresponding to the parabolic operator $(\d_t + \Ac_{0,0} + \rho \Ac_{0,1})$ can be obtained explicitly in all three of the cases we have considered (European, single barrier and double barrier).  As such, one might wonder why we expand the operator $\Ac$ in powers of $(x-\xb)$ and $(y-\yb)$ as well as in powers $\rho$ (as opposed to expanding in powers of $(x-\xb)$ and $(y-\yb)$ only).  The reason we expand in powers of $\rho$ is that, without this expansion, the integrals in \eqref{eq:unk} with respect to $s_1, s_2, \ldots, s_j$ cannot be computed explicitly in the single or double-barrier cases.  Thus, by expanding in $\rho$ avoid having to evaluate multidimensional numerical integrals.
\end{remark}

%
%

\section{Accuracy results}
\label{sec:accuracy}
In this section, we establish the accuracy of our formal pricing approximation for European options. Before stating our accuracy result, let us introduce some additional notation. For a set $E \subset \Rb^d$, denote by $C_b^{n,1}(E)$ the class of bounded functions on $E$ with globally Lipschitz continuous derivatives of order less than or equal to $n$. Let $\| f\|_{C_b^{n,1}}$ denote the sum of the $L^\infty$-norms of the derivatives of $f$ up to order $n$. We also denote by $C_b^{-1,1}(E) = L^\infty (E)$ and we set $\| \cdot \|_{C_b^{-1,1}} = \| \cdot \|_{L^\infty}$.
The following theorem describes the accuracy of the $N$th order approximation of the price of a European option written on an asset described by local-stochastic volatility dynamics.

\begin{theorem}
\label{thm:e.acc}
Consider the case $I = \Rb$. Suppose for some non-negative integer $N$ that $\sig, \mu,c,g \in C_b^{N,1}(\Rb^2)$ and that there exists a positive constant $M$ such that 
\begin{align}
\frac{1}{M} \leq \| \sig\|_{C_b^{N,1}},\, \| \mu \|_{C_b^{N,1}},\, \| c \|_{C_b^{N,1}},\, \| g \|_{C_b^{N,1}} \leq M .
\end{align}
Furthermore, assume that $\varphi \in C_b^{h-1,1}(\Rb^2)$ for some $0 \leq h \leq 2$. Then we have 
\begin{align}
|(u-\ub_0^\rho)(t,x,y)|
	&\leq C\, (T-t)^{\frac{h+1}{2}}, \label{eq:e.zeroaccuracy}
&& 0 \leq t < T,
&& x\in I, y \in \Rb.
\end{align}
For $N \geq 1$, we have
\begin{align}
|(u-\ub_N^\rho)(t,x,y)|
	&\leq C\, ((T-t)^{\frac 12} + |\rho|) \sum_{i=0}^N  |\rho|^i (T-t)^{\frac{N-i+h}{2}}
&& 0 \leq t < T,
&& x \in I, y \in \Rb.
\label{eq:e.accuracy}
\end{align}
The positive constants $C$ in \eqref{eq:e.zeroaccuracy} and \eqref{eq:e.accuracy} depend only on $M,N$ and $\|\varphi\|_{C_b^{h-1,1}}$. 
\end{theorem}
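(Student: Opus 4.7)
The strategy is standard for Taylor-expansion schemes of this type: write the error as the solution of a parabolic Cauchy problem with zero terminal data and a ``residual'' source, then convert the source into a pointwise bound via Feynman--Kac and moment estimates on $(X,Y)$. First, I would define $R_N^\rho := u - \ub_N^\rho$ and, using the nested PDE problems \eqref{eq:u00.pde}--\eqref{eq:unk.pde} together with $\Ac^{\eps,\rho}|_{\eps=1}=\Ac$, verify that $R_N^\rho$ solves
\begin{align*}
(\d_t + \Ac) R_N^\rho = -\Ec_N^\rho, \qquad R_N^\rho(T,\cdot,\cdot)=0,
\end{align*}
where $\Ec_N^\rho$ collects the Taylor remainders of the coefficients $(\mu, \sig^2, c, g^2, \sig g)$ acting on the $u_{n,k}$'s, together with the terms of order higher than $N$ in $\rho$. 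The key structural point is that every summand in $\Ec_N^\rho$ carries polynomial prefactors of the form $(x-\xb)^a (y-\yb)^b$ from the Taylor remainder, with $a+b \geq N+1-i$ (where $i$ counts the number of $\rho$ factors contributed by that term), and these vanish after we set $(\xb,\yb)=(x,y)$.

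Next, since $\Ac$ is uniformly parabolic under the hypotheses of the theorem, Feynman--Kac gives the probabilistic representation
\begin{align*}
R_N^\rho(t,x,y) = \int_t^T \Eb_{t,x,y}\bigl[\Ec_N^\rho(s,X_s,Y_s)\bigr]\,\dd s.
\end{align*}
The pointwise estimate on $\Ec_N^\rho$ then rests on two independent ingredients: (a) a moment bound $\Eb_{t,x,y}[|X_s-x|^a |Y_s-y|^b] \leq C(s-t)^{(a+b)/2}$, which follows from the global Lipschitz and boundedness hypotheses on $(\mu,\sig,c,g)$; and (b) derivative bounds on the $u_{n,k}$, which I would prove by induction on $n+k$ using the explicit representation in Proposition \ref{prp:u} together with standard Gaussian bounds on the fundamental solution $\Gam_{0,0}$ of $\d_t+\Ac_{0,0}$. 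Concretely, I expect to establish a recursive inequality of the shape $\|\d^\alpha u_{n,k}(s,\cdot)\|_\infty \leq C(T-s)^{(h-|\alpha|)/2}$ when $|\alpha|>h$, using the heat-kernel smoothing to convert $C_b^{h-1,1}$-regularity of $\varphi$ into bounded derivatives when $|\alpha|\leq h$, and to produce the appropriate negative power of $(T-s)$ otherwise. Combining (a) and (b) and integrating in $s$, each term of the residual contributes a quantity of order $|\rho|^i(T-t)^{(N+1-i+h)/2}$, which after summation yields \eqref{eq:e.accuracy}.

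The estimate \eqref{eq:e.zeroaccuracy} for $N=0$ is handled separately and slightly differently: here there is no Taylor cancellation in the residual, so one only uses the Lipschitz continuity of the coefficients themselves, picking up a single $(T-t)^{1/2}$ factor and giving the sharper $(T-t)^{(h+1)/2}$ bound. The main obstacle I anticipate is the combinatorial bookkeeping in the second step: one must track how each operator $\Ac_{n_i,k_i}$ appearing in the iterated representation \eqref{eq:unk} simultaneously produces a polynomial factor (a moment gain of $(s-t)^{1/2}$ per degree) and a derivative (a heat-kernel loss of $(T-s)^{-1/2}$ per order, tempered by the smoothness $h$ of $\varphi$), and verify that the exponent $(N-i+h)/2$ emerges after balancing these gains and losses against the nested time integrations. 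The prefactor $((T-t)^{1/2}+|\rho|)$ in \eqref{eq:e.accuracy} records the fact that the next neglected term in the expansion is either one further Taylor step, of size $(T-t)^{1/2}$, or one further correlation step, of size $|\rho|$.
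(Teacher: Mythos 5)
Your overall strategy is the same as the paper's: reduce the error $u-\ub_N^\rho$ to the solution of a Cauchy problem with zero terminal data and a residual source, represent it via Duhamel (the paper uses the fundamental solution of $\d_t+\Ac$; you use the equivalent Feynman--Kac form), and then balance Taylor-remainder moment gains against derivative losses on the $u_{n,k}$. The error-PDE step you sketch corresponds to the paper's Proposition \ref{prp:e.errorPDE}, and your point (a), the moment bound $\Eb_{t,x,y}\bigl[|X_s-x|^a|Y_s-y|^b\bigr]\le C(s-t)^{(a+b)/2}$, is the probabilistic form of the Gaussian moment bound the paper draws from its Lemma on $\Gam$. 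So the skeleton is right; however there are two genuine gaps in your ingredient (b).

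First, the bound you propose to prove, $\|\d^\alpha u_{n,k}(s,\cdot)\|_\infty\le C(T-s)^{(h-|\alpha|)/2}$, cannot hold as stated: with $\bar z$ fixed, the functions $u_{n,k}$ arise from repeated application of the operators $\Ac_{i,j}$, whose coefficients are polynomials of degree up to $n$ in $z-\bar z$, so $u_{n,k}$ and its derivatives grow polynomially in $|z-\bar z|$ and are \emph{not} in $L^\infty$. The correct estimate (Proposition \ref{prp:derunk}) is
\begin{align}
\left|D_z^\beta u_{n,k}(t,z)\right|
	\leq C\, (T-t)^{\frac{n+h-|\beta|}{2}}\Bigl(1+|z-\bar z|^n(T-t)^{-n/2}\Bigr),
\end{align}
which records both the polynomial growth \emph{and} a compensating time-decay factor $(T-t)^{n/2}$ that your version drops. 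Both features are load-bearing: the $|z-\bar z|^n$ term must be absorbed by the Gaussian moments of the representation, and the $(T-t)^{n/2}$ gain is what makes each additional Taylor order $n$ produce another half power of time. Without this precise form the exponent $(N-i+h)/2$ in \eqref{eq:e.accuracy} does not emerge from the bookkeeping you describe.

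Second, for the mixed $\d_x\d_y$ piece of the residual, the paper does not estimate $D_\zeta^{(1,1)}u_{n,k}$ directly. Instead it integrates by parts once, transferring one derivative to the Taylor remainder and to the kernel, so that only a \emph{first} derivative of $u_{n,k}$ appears, costing $(T-s)^{(h-1)/2}$ rather than $(T-s)^{(h-2)/2}$. This is not cosmetic: for the admissible value $h=0$ the direct estimate would produce a non-integrable $(T-s)^{-1}$ in the time integral, and the asserted conclusion would fail at exactly the borderline regularity that the theorem allows for $\varphi$. Your sketch proceeds without this integration by parts and so would not cover $h<2$ as the statement requires. A related minor point: the polynomial prefactors $(x-\xb)^a(y-\yb)^b$ in the residual do not ``vanish'' after setting $(\xb,\yb)=(x,y)$; they are evaluated at the forward variable $(X_s,Y_s)$ (or $\zeta$) and become small only in the sense of having moments of order $(s-t)^{(a+b)/2}$, which is precisely your point (a) and should replace the ``vanish'' language.
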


\begin{proof}
See Appendix \ref{App:e.proof}.
\end{proof}

Establishing asymptotic accuracy for barrier-style claims remains an open problem for the following reason. The proof of Theorem \ref{thm:e.acc} exploits Gaussian symmetry present in the pricing kernel of the zeroth order European problem. This symmetry is absent in both the single-barrier and double-barrier cases, and hence the same techniques for proving accuracy cannot be applied.  In the following section we explore the accuracy of our approximations for barrier-style claims in several numerical examples.

%
%

\section{Numerical examples}
\label{sec:numerics}

\subsection{Heston model}
\label{sec:heston}
In this section, we implement our pricing approximation for an underlying $S = \ee^X$ that has Heston dynamics \cite{heston1993}.
Specifically, we suppose that $(X,Y)$ satisfies
\begin{align}
\dd X_t
	&= - \half Y_t\, \dd t + \sqrt{Y_t}\, \dd W_t,  &
\dd Y_t
	&= \kappa (\theta - Y_t)\, \dd t  + \delta \sqrt{Y_t}\, \dd B_t, &
\dd \<W,B\>_t 
	&=	\rho\, \dd t , \label{eq:plotdynamics}
\end{align}
where $2 \kappa \theta \geq \del^2$ so that the $Y$ process remains strictly positive.  In our numerical experiments, we consider double-barrier knock-out calls and puts with the following parameters fixed
\begin{center}
\begin{tabular}{|l|l|l|l|l|l|l|l|l|}
\hline 
	$X_0$ & $Y_0$ & $K$  & $T$ & $\rho$ & $\kappa$ & $\theta$ & $\delta$ \\
\hline
	0.62  & 0.04   & .62  & 0.083 & -0.4 & 1.15 & 0.04 & 0.2 \\
\hline
\end{tabular}
\end{center}
where $\ee^K$ represents the strike and $T$ represents the maturity date. We first consider call payoffs $\varphi(x) = (\ee^x - \ee^K)^+$ with the lower barrier $L = 0$ fixed and the upper barrier $U>K$ varying. We compute both our zeroth and second order price approximation $\ub_0^\rho$ and $\ub_2^\rho$ as well as ``exact'' price $u$, which we obtain via Monte Carlo simulation.    In Figure \ref{fig:ATMcallerror}, we plot the error $u-\ub_0^\rho$ and $u-\ub_2^\rho$ of our zeroth and second order approximations as a function of the upper barrier $U$.  To get a sense of the scale of the error, we also plot in Figure \ref{fig:ATMcall} the exact price $u$ as a function of $U$. In Figures \ref{fig:ATMputerror} and \ref{fig:ATMput}, we provide analogous plots for put payoffs $\varphi(x) = (\ee^K - \ee^x)^+$ with the upper barrier $U=1$ fixed while varying the lower barrier $L<K$. We see from Figures \ref{fig:ATMcallerror} and \ref{fig:ATMputerror} that $\bar u_2$ provides a more accurate approximation of $u$ than $\ub_0^\rho$ for both puts and calls at nearly all levels of $L$ and $U$. 

\begin{remark}
We omit the first order approximation $\ub_1^\rho$ in Figures \ref{fig:ATMcallerror} and \ref{fig:ATMputerror} for the following reason.  The difference $|\ub_0^\rho - \ub_1^\rho|$ is small compared to $|\ub_0^\rho - \ub_2^\rho|$ because, when the payoff function $\varphi$ depends only on $x$ (as is the case for call and put payoffs), we have $u_{0,1} = 0$. Therefore, the first correlation correction term in our approximation appears at the second order in $u_{1,1}$.  The effect of including the first correlation correction is large compared to the first correction due to $y$-dependence in the coefficients of $\Ac$.
\end{remark}

\subsection{CEV Model}
\label{sec:cev}
In this section, we implement our pricing approximation for an underlying $S = \ee^X$ that has Constant Elasticity of Variance (CEV) dynamics \cite{CoxCEV}.
Specifically, we suppose that $X$ satisfies
\begin{align}
\dd X_t
	&= - \frac 12 \sig^2 \ee^{2 X_t (\gam - 1)}\, \dd t + \sig \ee^{X_t (\gam - 1)}\, \dd W_t. \label{eq:CEVdynamics}
\end{align}
where $\sig >0$ and $\gam > 0 $.  We consider double-barrier knock-out calls and puts with the following parameters fixed
\begin{center}
\begin{tabular}{|l|l|l|l|l|}
\hline 
	$X_0$ & $K$  & $T$  & $\sig$ & $\gam$ \\
\hline
	0.62  & 0.62  & 0.083 & 0.32 & 0.019  \\
\hline
\end{tabular}
\end{center}
where $\ee^K$ represents the strike and $T$ represents the maturity date. We first consider call payoffs $\varphi(x) = (\ee^x - \ee^K)^+$ with the lower barrier $L = 0$ fixed and the upper barrier $U>K$ varying. We compute the zeroth and second order price approximation $\ub_0$ and $\ub_2$, respectively, as well as ``exact'' price $u$, which we obtain via Monte Carlo simulation.  Note that we omit the superscript $\rho$ from $u$ and $\ub$ as correlation plays no role in a local volatility setting.
In Figure \ref{fig:ATMCEVcallerror}, we plot the error $u-\ub_0$ and $u-\ub_2$ of our zeroth and second order approximations as functions of the upper barrier $U$.  To get a sense of the scale of the error, we also plot in Figure \ref{fig:ATMCEVcall} the exact price $u$ as a function of $U$. In Figures \ref{fig:ATMCEVputerror} and \ref{fig:ATMCEVput}, we provide analogous plots for put payoffs $\varphi(x) = (\ee^K - \ee^x)^+$ with the upper barrier $U=1$ fixed while varying the lower barrier $L<K$. 
We see from Figures \ref{fig:ATMCEVcallerror} and \ref{fig:ATMCEVputerror} that in both the call and put cases the second order approximation outperforms the zeroth order approximation.

\section{Conclusion}
\label{sec:conclusion}
In this paper we have presented a formal pricing approximation for European and barrier-style claims in a local-stochastic volatility setting.  We have provided rigorous accuracy results for European-style claims.  And we have provided several numerical examples illustrating the accuracy and versatility of our approximation for barrier-style claims.  Future research will focus on extending our techniques to other path-dependent derivatives, such as lookback and variance-style claims.

\subsection*{Acknowledgments}
The authors are grateful to Stefano Pagliarani and Andrea Pascucci for helpful feedback on this manuscript.

%
%

\appendix
\section{Proof of Proposition \ref{prp:u}} \label{App:AppendixA}
In this section, we present the proof of Proposition \ref{prp:u}. 

\begin{proof}[Proof of Proposition \ref{prp:u}]
We first note that formula \eqref{eq:unk} holds for $(n,k) = (1,0)$ and $(n,k) = (0,1)$ by applying Duhamel's principle to \eqref{eq:u10.pde} and \eqref{eq:u01.pde}. Next, assume as an inductive hypothesis that for non-negative integers $n$ and $k$ such that $n + k \geq 1$ formula \eqref{eq:unk} holds for pairs of non-negative integers $(m,j)$ such that $m+j \leq n+k$. Define 
\begin{align}
A_{n,k}^b
	:= \{ (i,j) \mid 0 \leq i \leq n,\, 0 \leq j \leq k,\, 1 \leq i + j \leq b \}.
\end{align}
Applying Duhamel's principle to \eqref{eq:unk.pde}, we see that
\begin{align}
&u_{n+1,k} \\
	= &\int_{t}^T \dd s \Pc_{0,0} (t,s) \( 
						\sum_{i=0}^{n+1} \sum_{j=0}^k (1-\del_{i+j,0}) \Ac_{i,j}u_{n-i+1,k-j} 
					\) 
	= \int_{t}^T \dd s \Pc_{0,0} (t,s) \( 
						\sum_{\substack{(i,j) \in\\ A_{n+1,k}^{n+k+1}}} \Ac_{i,j}u_{n-i+1,k-j} 
					\) \\
	=&
		\int_t^T \dd s \Pc_{0,0}(t,s) \Ac_{n+1,k} \Pc_{0,0}(s,T) \varphi 
		+\sum_{\substack{(i,j) \in\\ A_{n+1,k}^{n+k+1}}}\sum_{l=1}^{n+k-i-j+1} \sum_{I_{n-i+1,k-j,l}} 
			\int_t^T \dd s \int_s^T \dd s_1 \cdots \int_{s_{l-1}}^T \dd s_l \\
	&\quad 
			\Pc_{0,0}(t,s) \Ac_{i,j} 
			\Pc_{0,0}(s,s_1) \Ac_{(n-i+1)_1,(k-j)_1} \cdots 
			\Pc_{0,0}(s_{l-1},s_l) \Ac_{(n-i+1)_l,(k-j)_l} 
			\Pc_{0,0}(s_l,T)\varphi, \label{eq:ps1} 
\end{align}
where \eqref{eq:ps1} follows from our inductive hypothesis.  Reordering the sums in \eqref{eq:ps1} we obtain
\begin{align}
u_{n+1,k} 
	&= \int_t^T \dd s \Pc_{0,0}(t,s) \Ac_{n+1,k} \Pc_{0,0}(s,T) \varphi 
	+\sum_{l=1}^{n+k}\sum_{\substack{(i,j) \in\\ A_{n+1,k}^{n+k-\ell+1}}} \sum_{I_{n-i+1,k-j,l}} 
			\int_t^T \dd s \int_s^T \dd s_1 \cdots \int_{s_{l-1}}^T \dd s_l \\
	&\quad 
			\Pc_{0,0}(t,s) \Ac_{i,j} 
			\Pc_{0,0}(s,s_1) \Ac_{(n-i+1)_1,(k-j)_1} \cdots 
			\Pc_{0,0}(s_{l-1},s_l) \Ac_{(n-i+1)_l,(k-j)_l} 
			\Pc_{0,0}(s_l,T)\varphi. \label{eq:ps2}
\end{align}
Next, note that
\begin{align}
I_{n+1,k,\ell}
	= \bigcup_{(i,j) \in A_{n+1,k}^{n+k-\ell+1}} \left\{ 
	\begin{pmatrix}
	 i & n_1 & n_2 & \cdots & n_{\ell - 1} \\
	 j & k_1 & k_2 & \cdots & k_{\ell -1 }
	\end{pmatrix}
	\Big| 
	\begin{pmatrix}
	  n_1 & n_2 & \cdots & n_{\ell - 1} \\
	  k_1 & k_2 & \cdots & k_{\ell -1 }
	\end{pmatrix} \in I_{n-i+1,k-j,\ell -1}\right\}. \label{eq:inkform}
\end{align}
Therefore, combining \eqref{eq:inkform} with \eqref{eq:ps2} we obtain
\begin{align}
u_{n+1,k} 
	=&\int_t^T \dd s \Pc_{0,0}(t,s) \Ac_{n+1,k} \Pc_{0,0}(s,T) \varphi 
	+\sum_{l=1}^{n+k} \sum_{I_{n+1,k,l+1}} 
			\int_t^T \dd s \int_s^T \dd s_1 \cdots \int_{s_{l-1}}^T \dd s_l \\
	&\quad 
		\Pc_{0,0}(t,s) \Ac_{(n+1)_1,k_1} 
		\Pc_{0,0}(s,s_1) \Ac_{(n+1)_2,k_2} \cdots 
		\Pc_{0,0}(s_{l-1},s_l) \Ac_{(n+1)_{l+1},k_{l+1}} 
		\Pc_{0,0}(s_l,T)\varphi. 
\end{align}
Relabeling $(s,s_1,s_2,\cdots,s_\ell) \to (s_1,s_2,\cdots,s_{\ell+1})$ and reindexing and gives
\begin{align}
u_{n+1,k} 
	=&\int_t^T \dd s_1 \Pc_{0,0}(t,s_1) \Ac_{n+1,k} \Pc_{0,0}(s_1,T) \varphi 
	+\sum_{l=2}^{n+k+1} \sum_{I_{n+1,k,l}} 
			\int_t^T \dd s_1 \int_{s_1}^T \dd s_2 \cdots \int_{s_{l-1}}^T \dd s_{l} \\
	&\quad 
		\Pc_{0,0}(t,s_1) \Ac_{(n+1)_1,k_1} 
		\Pc_{0,0}(s_1,s_2) \Ac_{(n+1)_2,k_2} \cdots 
		\Pc_{0,0}(s_{l-1},s_{l}) \Ac_{(n+1)_{l},k_{l}} 
		\Pc_{0,0}(s_{l},T)\varphi \\
	=&\sum_{l=1}^{n+k+1} \sum_{I_{n+1,k,l}} 
			\int_t^T \dd s_1 \int_{s_1}^T \dd s_2 \cdots \int_{s_{\ell}}^T \dd s_{l} \\
	&\quad 
		\Pc_{0,0}(t,s_1) \Ac_{(n+1)_1,k_1} 
		\Pc_{0,0}(s_1,s_2) \Ac_{(n+1)_2,k_2} \cdots 
		\Pc_{0,0}(s_{l-1},s_{l}) \Ac_{(n+1)_{l},k_{l}} 
		\Pc_{0,0}(s_{l},T)\varphi, 
\end{align}
which is \eqref{eq:unk} for the case $(n+1,k)$. The proof for the case $(n,k+1)$ is analogous.
\end{proof}

\section{Proof of Theorem \ref{thm:e.acc}}
\label{App:e.proof}

In this section, we prove Theorem \ref{thm:e.acc}.  Our strategy is to adapt the proof of asymptotic accuracy in \cite{lorig-pagliarani-pascucci-4} to our present situation.  As such, many of the propositions and lemmas needed for the proof of Theorem \ref{thm:e.acc} follow from analogous propositions and lemmas contained in \cite{lorig-pagliarani-pascucci-4}.

Throughout this section, we let $z = (x,y)$, $\zb = (\xb,\yb)$ and $\zeta = (\xi,\eta)$ be elements of $\Rb^2$. It will also be convenient to introduce multi-index notation for the operators $\Ac$ and $\Ac_{n,k}$.  We have 
\begin{align}
\Ac	
	&= \sum_{|\alpha| \leq 2} a_\alpha(z) D_z^\alpha, &
\Ac_{n,k}
	&= \sum_{\alpha \in A_k} a_{\alpha,n} (z) D_z^\alpha, \label{def:multiA}\\
A_0
	&= \{ (1,0), (0,1), (2,0), (0,2)\}, &
A_1
	&= \{ (1,1) \},
\end{align}
where
\begin{align}
\alpha
	&= (\alpha_1,\alpha_2) \in A_0 \cup A_1, &
|\alpha|
	&= \alpha_1 + \alpha_2, &
D_z^\alpha
	&= \partial_{z_1}^{\alpha_1} \partial_{z_2}^{\alpha_2}.
\end{align}

Before proving Theorem \ref{thm:e.acc}, we require some preliminary results.  In what follows, we denote by $\Gam$ the fundamental solution corresponding to the parabolic operator $(\d_t + \Ac)$.

\begin{lemma}
For any $\delta > 0$, and $\alpha, \beta \in \Nb_0^2$ with $\beta \leq N + 2$, we have
\begin{align}
\left| (z - \zeta)^\alpha D_z^\beta \Gam(t,z;T,\zeta)\right|
	&\leq C\cdot (T-t)^{\frac{|\alpha| - |\beta|}{2}} \Gamh(t,z;T,\zeta),
&& 0 \leq t < T, \, &&z,\zeta \in I \times \Rb, \label{eq:e.full.bound}
\end{align}
and 
\begin{align}
\left| (z - \zeta)^\alpha D_\zeta^\beta \Gam(t,z;T,\zeta)\right|
	&\leq C\cdot (T-t)^{\frac{|\alpha| - |\beta|}{2}} \Gamh(t,z;T,\zeta),
&& 0 \leq t < T, \, &&z,\zeta \in I \times \Rb,  \label{eq:e.full.bound.forward}
\end{align}
where $\Gamh$ is the fundamental solution of the operator $(\d_t + (M + \del)(\d_{z_1}^2 + \d_{z_2}^2))$, and $C$ is a positive constant dependent only on $M,N,\delta$ and $|\beta|$. 
\end{lemma}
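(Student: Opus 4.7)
The plan is to invoke the classical parametrix (Levi) construction for the fundamental solution of the uniformly parabolic operator $(\d_t + \Ac)$, then absorb polynomial moments into the Gaussian envelope at the cost of a slightly inflated diffusion constant. Since $\sig, g$ are bounded below by $1/M$, bounded above by $M$, and $\rho \in (-1,1)$, the operator $\Ac$ is uniformly elliptic on $\Rb^2$; combined with the $C_b^{N,1}$ regularity of its coefficients, the standard theory (Friedman; Il'in--Kalashnikov--Oleinik) produces a fundamental solution $\Gam$ along with two-sided Gaussian bounds and pointwise estimates on backward derivatives up to order $N+2$. Concretely, one obtains
\begin{align}
|D_z^\beta \Gam(t,z;T,\zeta)|
	&\leq \frac{C}{(T-t)^{|\beta|/2}}\, G_{M'}(t,z;T,\zeta), & |\beta|\leq N+2,
\end{align}
where $G_{M'}$ is the Gaussian kernel associated with the constant-coefficient operator $\d_t + M'(\d_{z_1}^2 + \d_{z_2}^2)$ for some $M' > M$. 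This handles the $\alpha = 0$ case of \eqref{eq:e.full.bound}.

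Next I would absorb the polynomial factor $(z-\zeta)^\alpha$ via the elementary inequality
\begin{align}
|z-\zeta|^{|\alpha|} \exp\!\Big(-\tfrac{|z-\zeta|^2}{4M'(T-t)}\Big)
	&\leq C\, (T-t)^{|\alpha|/2} \exp\!\Big(-\tfrac{|z-\zeta|^2}{4(M+\delta)(T-t)}\Big),
\end{align}
valid whenever $M+\delta < M'$ (and up to enlarging the constant, for any prescribed $\delta>0$ by choosing $M'$ accordingly in Step~1). This is precisely where the slack $\delta>0$ in the statement is consumed, and combining with Step~1 yields \eqref{eq:e.full.bound}.

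For the forward-variable bound \eqref{eq:e.full.bound.forward}, the idea is to exploit the fact that, viewed as a function of $(T,\zeta)$, the kernel $\Gam(t,z;T,\zeta)$ is the fundamental solution of the forward Kolmogorov equation $(\d_T - \Ac_\zeta^\ast)\Gam = 0$, where $\Ac^\ast$ is the formal adjoint of $\Ac$. Since the coefficients of $\Ac^\ast$ are polynomials in the first two derivatives of the coefficients of $\Ac$, the hypothesis $\sig,\mu,c,g \in C_b^{N,1}$ guarantees that $\Ac^\ast$ has the regularity required to rerun the parametrix argument of Step~1 and the moment absorption of Step~2, now in the forward variable. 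This gives the analogue of \eqref{eq:e.full.bound} for $D_\zeta^\beta \Gam$ and concludes the proof.

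The main obstacle is the forward-variable estimate \eqref{eq:e.full.bound.forward}: the parametrix method is naturally formulated in the backward variables, so obtaining bounds on $D_\zeta^\beta \Gam$ up to order $N+2$ requires differentiating the adjoint equation, which consumes derivatives of $(\sig,\mu,c,g)$. This is precisely why the sharp regularity assumption $C_b^{N,1}$ (as opposed to merely $C^N$) appears in the hypotheses, and it is the delicate point where the adaptation of the corresponding lemma of \cite{lorig-pagliarani-pascucci-4} has to be carried out with care.
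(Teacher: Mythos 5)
Your approach matches the paper's: the backward bound \eqref{eq:e.full.bound} is the Gaussian parametrix estimate (the paper simply cites \cite[Lemma 6.21]{lorig-pagliarani-pascucci-4}), and the forward bound \eqref{eq:e.full.bound.forward} is read off from the Kolmogorov forward equation exactly as you do. Your Steps~1--2 fill in the Friedman/Il'in--Kalashnikov--Oleinik heat-kernel estimates and the moment-absorption inequality that sit behind the cited lemma; that part is standard and correct.

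Where the write-up is too glib is the forward estimate. You assert that, because the coefficients of $\Ac^\ast$ are polynomials in the first two derivatives of $(\sig,\mu,c,g)$, the hypothesis $\sig,\mu,c,g \in C_b^{N,1}$ suffices to re-run the parametrix argument for $\Ac^\ast$ up to order $N+2$. But the zeroth-order coefficient of $\Ac^\ast$ contains \emph{two} derivatives of the second-order coefficients of $\Ac$, so it lies only in $C_b^{N-2,1}$; a naive re-run of the parametrix for $\Ac^\ast$ therefore yields bounds on $D_\zeta^\beta\Gam$ only for $|\beta|\leq N$, two orders short of the stated $N+2$. The paper's one-sentence justification does not confront this loss either, and it is harmless for the application (the proof of Theorem~\ref{thm:e.acc} only ever invokes \eqref{eq:e.full.bound.forward} with $|\beta|=1$), but your heuristic ``$C_b^{N,1}$ guarantees the regularity required to re-run the parametrix'' is not literally correct at the claimed order. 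To close this cleanly at order $N+2$ one should instead differentiate the parametrix series representation of $\Gam$ in the forward variable, where the $\zeta$-dependence enters only through the Gaussian parametrix and the Taylor remainders of the coefficients of $\Ac$ itself (not of $\Ac^\ast$), rather than restarting the construction from the adjoint.
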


\begin{proof}
The result \eqref{eq:e.full.bound} is \cite[Lemma 6.21]{lorig-pagliarani-pascucci-4}. The inequality \eqref{eq:e.full.bound.forward} can be seen by examining the Kolmogorov forward equation. 
\end{proof}

The following fact will also be helpful. Let $a$ and $b$ be constants such that $a,b \geq 1/2$. Then, for $0 \leq t < T$
\begin{align}
\int_t^T\dd s\, (T-s)^a (s-t)^b
	&=	\frac{\Gam_E (a +1) \Gam_E (b +1) }{\Gam_E (a +b +2)}(T-t)^{a +b +1}, \label{eq:timeint}
\end{align}
where $\Gam_E$ is the Euler gamma function.

\begin{proposition}
\label{prp:derunk}
Under the assumptions of Theorem \ref{thm:e.acc}, for any multi-index $\beta \in \Nb_0^2$ we have
\begin{align}
\left|D_z^\beta u_{0,0}(t,z)\right|
	&\leq C\cdot (T -t)^{\frac{\min\{h-|\beta|,0\}}{2}}, 
&&0 \leq t < T, 
&&z, \zb \in \Rb^2.
	\label{eq:e.u00bound}
\end{align}
If $N\geq 1$, then for any $n,k \in \Nb$, $1 \leq n + k \leq N$, we have
\begin{align}
\left|D_z^\beta u_{n,k}(t,z) \right|
	&\leq C \cdot (T - t)^{\frac{n+h - |\beta|}{2}}
	\(1 + |z-\bar z|^n (T - t)^{-\frac{n}{2}}\),
&&0 \leq t < T, 
&&z,\zb \in \Rb^2.
\label{eq:e.unkbound}
\end{align}
The constants in \eqref{eq:e.u00bound} and \eqref{eq:e.unkbound} depend only on $M,N,|\beta|$ and $\| \varphi\|_{C_b^{h-1,1}}$.
\end{proposition}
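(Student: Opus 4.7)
The plan is to adapt the inductive reasoning of \cite{lorig-pagliarani-pascucci-4} to the present setting by combining the explicit representation of $u_{n,k}$ in Proposition~\ref{prp:u} with the Gaussian kernel bounds \eqref{eq:e.full.bound}--\eqref{eq:e.full.bound.forward} on $\Gam_{0,0}$ and the homogeneity of the Taylor coefficients $\chi_n$ in \eqref{eq:gn}.

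For the base case $(n,k)=(0,0)$, I would start from $u_{0,0}(t,z)=\int_{\Rb^2}\Gam_{0,0}(t,z;T,\zeta)\varphi(\zeta)\,d\zeta$ and differentiate under the integral. The translation invariance of the European kernel $\Gam_{0,0}$ in $(z,\zeta)$ allows one to write $D_z^\beta\Gam_{0,0}=(-1)^{|\beta|}D_\zeta^\beta\Gam_{0,0}$, and then integration by parts transfers $\min\{|\beta|,h\}$ derivatives onto $\varphi$, which are controlled by $\|\varphi\|_{C_b^{h-1,1}}$; the remaining $(|\beta|-h)_+$ derivatives stay on the kernel and yield the factor $(T-t)^{-(|\beta|-h)_+/2}$ via \eqref{eq:e.full.bound.forward}. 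This delivers \eqref{eq:e.u00bound}.

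For $n+k\geq 1$, I would work directly from \eqref{eq:unk}, placing $D_z^\beta$ on the leading kernel $\Gam_{0,0}(t,z;s_1,z^{(1)})$. Each operator $\Ac_{n_i,k_i}=\sum_{\alpha\in A_{k_i}}a_{\alpha,n_i}(z^{(i)})D_{z^{(i)}}^\alpha$ has coefficients bounded by $|a_{\alpha,n_i}(z^{(i)})|\leq C|z^{(i)}-\zb|^{n_i}$ thanks to \eqref{eq:gn}. The central device is the elementary splitting
\[
|z^{(i)}-\zb|^{n_i}\;\leq\; C\sum_{p=0}^{n_i}\binom{n_i}{p}|z^{(i)}-z|^{p}|z-\zb|^{n_i-p},
\]
so that the factors $|z^{(i)}-z|^{p}$ pair with the spatial derivatives of $\Gam_{0,0}$ to produce powers $(s_i-s_{i-1})^{(p-r)/2}$ (where $r$ is the number of derivatives sitting on that kernel) via \eqref{eq:e.full.bound}, while the $|z-\zb|^{n_i-p}$ factor is pulled out of every spatial integral. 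Integration by parts moves the remaining $D_{z^{(i)}}^\alpha$ from inside each $\Ac_{n_i,k_i}$ onto the neighbouring kernels (and, by the Leibniz rule, onto the polynomial coefficients, lowering their degree), and the same procedure is iterated all the way to $\varphi$, where the base-case estimate is applied.

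Once the spatial integrations are carried out, each summand reduces to an iterated time integral of the form $\int_t^T\!\!\int_{s_1}^T\!\!\cdots\int_{s_{j-1}}^T\prod_i(s_i-s_{i-1})^{a_i}(T-s_j)^{b}\,ds_1\cdots ds_j$ with $a_i,b\geq -\tfrac12$ and $\sum_i a_i+b+j=(n+h-|\beta|)/2$. Repeated application of \eqref{eq:timeint} collapses this into a single factor $(T-t)^{(n+h-|\beta|)/2}$ multiplied by a polynomial in $|z-\zb|(T-t)^{-1/2}$ of degree $\leq n$, producing \eqref{eq:e.unkbound}. The main obstacle is the combinatorial bookkeeping: I must verify that at every semigroup step the number of spatial derivatives introduced (at most $|\beta|$ from the outside plus $|\alpha|\leq 2$ from each $\Ac_{n_i,k_i}$) is offset by a matching supply of Taylor moments $|z^{(i)}-\zb|^{n_i}$ after the splitting, so that every time-exponent satisfies $a_i\geq -\tfrac12$ and the iterated integrals converge. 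It is the Gaussian translation symmetry used in the base case that makes this balance work; the absence of an analogous symmetry in the single- and double-barrier settings is precisely why Theorem~\ref{thm:e.acc} is stated only in the European case.
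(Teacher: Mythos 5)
The paper's own ``proof'' of Proposition~4.2 is simply the remark that it is analogous to Lemma~6.24 of \cite{lorig-pagliarani-pascucci-4}, so there is no in-paper argument to compare against; the question is whether your reconstruction is sound. Your base case is essentially correct: translating $D_z^\beta$ to $(-1)^{|\beta|}D_\zeta^\beta$ on the constant-coefficient Gaussian $\Gam_{0,0}$, integrating by parts $\min\{|\beta|,h\}$ times onto $\varphi\in C_b^{h-1,1}$, and paying $(T-t)^{-(|\beta|-h)_+/2}$ for the leftover kernel derivatives yields exactly \eqref{eq:e.u00bound}. The ingredients you assemble for the inductive step --- Gaussian moment bounds, the degree-$n_i$ Taylor polynomial estimate $|a_{\alpha,n_i}(\zeta)|\lesssim|\zeta-\zb|^{n_i}$, the binomial splitting $|\zeta-\zb|^{n_i}\lesssim\sum_p|\zeta-z|^p|z-\zb|^{n_i-p}$, integration by parts to redistribute the second-order derivatives coming from each $\Ac_{n_i,k_i}$, and the Beta-function identity \eqref{eq:timeint} --- are the right ones.

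However, the inductive step as sketched has a genuine gap, which you acknowledge but do not close. Working directly from the fully unrolled representation \eqref{eq:unk} means you must simultaneously track $j$ intermediate spatial variables, $j$ time increments, and the redistribution of $|\beta|$ external derivatives plus two derivatives per operator $\Ac_{n_i,k_i}$; you assert but do not verify that every time exponent $a_i$ stays $\geq -\tfrac12$. This is not automatic: for instance, the term with $(n_i,k_i)=(0,1)$ contributes $\Ac_{0,1}=(\sig g)_0\,\d_x\d_y$, a constant-coefficient second-order operator with no compensating Taylor moment, so unless both of its derivatives are moved by parts onto \emph{different} neighbouring kernels, the integrand picks up a non-integrable $(s_i-s_{i-1})^{-1}$. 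A cleaner route, and likely what the reference proof does, is to induct on $n+k$ using the one-step Duhamel formula
\begin{align}
u_{n,k}(t,z)
	= \int_t^T\dd s\int_{\Rb^2}\dd\zeta\,\Gam_{0,0}(t,z;s,\zeta)\sum_{\substack{(i,j)\\ 1\leq i+j}}\Ac_{i,j}u_{n-i,k-j}(s,\zeta),
\end{align}
apply the inductive estimate \eqref{eq:e.unkbound} to $D_\zeta^\alpha u_{n-i,k-j}(s,\zeta)$, split \emph{both} the coefficient moment $|\zeta-\zb|^{i}$ and the inherited factor $|\zeta-\zb|^{n-i}$ about $z$, and absorb the resulting $|\zeta-z|$-moments via \eqref{eq:e.full.bound}. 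This collapses the whole argument to a single time integral handled by \eqref{eq:timeint}, and the integration-by-parts bookkeeping reduces to at most one step per iteration (needed only when $|\beta|\geq 2$ or when $(i,j)=(0,1)$), which is straightforward to verify. As written, your proposal is a sound outline but would need this closer accounting --- either via the clean induction or by explicitly checking the exponent budget in the unrolled sum --- before it constitutes a proof.
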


\begin{proof}
The proof is analogous to the proof of \cite[Lemma 6.24]{lorig-pagliarani-pascucci-4}. 
\end{proof}

\begin{proposition}
\label{prp:e.errorPDE}
Define for $i \geq 0$
\begin{align}
\Ac_{i}^\rho
	&:= \Ac_{i,0} + \rho \Ac_{i-1,1}, 
&\Acb_n^\rho
	:= \sum_{i=0}^n  \Ac_i^\rho, \label{eq:e.Abar} \\
u_n^\rho
	&:= \sum_{i=0}^n \eps^i \rho^{n-i} u_{i,n-i}\Big|_{\eps = 1}, \label{eq:e.ubar}
\end{align}
with the convention that $\Ac_{-1,1} = 0$. Then for $N \geq 0$, we have
\begin{align}
(u - \bar u_N^\rho)(t,z)
	= \int_t^T\dd s \int_{\Rb^2} \dd \zeta\, \Gam(t,z;s,\zeta) \sum_{i=0}^N ( \Ac - \bar \Ac_i^\rho) u_{N-i}^\rho(s,\zeta). \label{eq:e.errorpde}
\end{align}
\end{proposition}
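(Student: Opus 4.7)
The strategy is to use Duhamel's principle for the operator $(\d_t + \Ac)$ and then identify the source term explicitly through a combinatorial rearrangement of the nested PDEs \eqref{eq:u00.pde}--\eqref{eq:unk.pde}. First, since $u_{0,0}(T,\cdot) = \varphi$ and $u_{n,k}(T,\cdot) = 0$ for all $(n,k) \neq (0,0)$, we have $\bar u_N^\rho(T,\cdot) = \varphi$, so the error $v := u - \bar u_N^\rho$ satisfies $v(T,\cdot) = 0$. Because $(\d_t + \Ac) u = 0$, $v$ solves the inhomogeneous backward problem $(\d_t + \Ac) v = -R_N$ with $R_N := (\d_t + \Ac) \bar u_N^\rho$. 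Writing $\Gam$ for the fundamental solution of $(\d_t + \Ac)$, Duhamel's formula yields
\begin{equation*}
v(t,z) \;=\; \int_t^T \dd s \int_{\Rb^2} \dd \zeta\; \Gam(t,z;s,\zeta)\, R_N(s,\zeta),
\end{equation*}
so everything reduces to showing $R_N(s,\zeta) = \sum_{i=0}^N (\Ac - \Acb_i^\rho) u_{N-i}^\rho(s,\zeta)$.

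\textbf{The key intermediate identity.} I would establish
\begin{equation*}
(\d_t + \Ac_{0,0})\, u_n^\rho \;=\; -\sum_{m=1}^n \Ac_m^\rho\, u_{n-m}^\rho , \qquad n \geq 0,
\end{equation*}
(the case $n=0$ being vacuous) by inserting the definition $u_n^\rho = \sum_{i+j=n} \rho^j u_{i,j}$ into PDE \eqref{eq:unk.pde} and reorganizing the source by the total $(\eps,\rho)$-order $m := p + q$ of the Taylor operator $\Ac_{p,q}$ that appears. The clean simplification is that $\Ac_{p,q}$ is nonzero only for $q \in \{0,1\}$ (the cross derivative $\d_x\d_y$ is the unique $\rho$-bearing piece of $\Ac$), so at each total order $m \geq 1$ the inner $(p,q)$-sum collapses to $\Ac_{m,0} + \rho\, \Ac_{m-1,1} = \Ac_m^\rho$, with the convention $\Ac_{-1,1}=0$ handling $m=1$ uniformly; meanwhile the remaining $\rho^{j'}$-weighted sum over $u_{i',j'}$ with $i'+j' = n-m$ recombines into exactly $u_{n-m}^\rho$.

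\textbf{Assembly of $R_N$.} Splitting $(\d_t + \Ac) u_n^\rho = (\d_t + \Ac_{0,0}) u_n^\rho + (\Ac - \Ac_0^\rho) u_n^\rho$ (noting $\Ac_0^\rho = \Ac_{0,0}$), summing over $n=0,\ldots,N$, and substituting the intermediate identity, I would merge the single sum $\sum_{n=0}^N \Ac_0^\rho u_n^\rho$ with the double sum $\sum_{n=0}^N\sum_{m=1}^n \Ac_m^\rho u_{n-m}^\rho$ into $\sum_{m,n' \geq 0,\, m+n' \leq N} \Ac_m^\rho u_{n'}^\rho$. Grouping by $n'$ collapses the inner $m$-sum to $\Acb_{N-n'}^\rho$, so $R_N = \sum_{n'=0}^N (\Ac - \Acb_{N-n'}^\rho) u_{n'}^\rho$, and reindexing $i := N-n'$ delivers the claim.

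\textbf{Main obstacle.} The substantive step is the combinatorial bookkeeping in the intermediate identity: one must recognize that the natural grouping of the source in \eqref{eq:unk.pde} is by total $(\eps,\rho)$-order rather than separately in $\eps$ and $\rho$, and that the convention $\Ac_{-1,1}=0$ makes the formula for $\Ac_m^\rho$ uniform across $m$. The Duhamel step and the final rearrangement are routine once this identification is in hand.
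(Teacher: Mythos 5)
Your proof is correct and takes essentially the same route as the paper's: a Duhamel reduction to the identity $(\d_t + \Ac)\ub_N^\rho = \sum_{i=0}^N(\Ac - \Acb_i^\rho)\,u_{N-i}^\rho$, obtained from the key intermediate identity $(\d_t + \Ac_{0,0})u_n^\rho + \sum_{m=1}^n \Ac_m^\rho\,u_{n-m}^\rho = 0$, which is precisely the paper's equation \eqref{eq:e.unk.pde}. The only difference is cosmetic: the paper closes the argument by induction on $N$, while you obtain the same result by a direct double-sum rearrangement; you also spell out the combinatorial derivation of \eqref{eq:e.unk.pde} (grouping by the total order $m = p + q$ and using $\Ac_{p,q}=0$ for $q\geq 2$), a step the paper leaves implicit.
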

\begin{proof}
We will show that
\begin{align}
\(\d_t + \Ac \) (u -  \ub_{N}^\rho )
	+ \sum_{i=0}^{N} \(\Ac - \Acb_{i}^\rho\) u_{N-i}^\rho
	&= 0 , &
(u -  \ub_{N}^\rho )(T,\cdot)
	&=	0 ,
\label{eq:e.errorhyp}
\end{align}
from which \eqref{eq:e.errorpde} follows by an application of Duhamel's principal.  
Note that \eqref{eq:e.errorhyp} follows if we show
\begin{align}
(\d_t + \Ac) \ub_N^\rho 
	&= \sum_{i=0}^N \( \Ac - \Acb_i^\rho\) u_{N-i}^\rho , \label{eq:e.ubsum} 
\end{align}
because $(\d_t + \Ac ) u = 0$ and $u(T,\cdot) = \ub_N^\rho(T,\cdot) = \varphi$. From equations \eqref{eq:u00.pde}, \eqref{eq:unk.pde}, \eqref{eq:e.Abar} and \eqref{eq:e.ubar}, we deduce
\begin{align}
(\d_t + \Ac_{0,0})u_{n}^\rho
+ \sum_{i=1}^n \Ac_i^\rho u_{n-i}^\rho
	&= 0 . \label{eq:e.unk.pde}
\end{align}
We now proceed to show \eqref{eq:e.ubsum} by induction.  When $N = 0$, since $u_{0}^\rho  = \bar u_0^\rho$, we have
\begin{align}
\(\d_t + \Ac\) \bar u_0^\rho
	= \(\Ac - \Acb_{0}^\rho\) u_0^\rho.
\end{align}
Assume now that \eqref{eq:e.ubsum} holds for $N \geq 1$. Then we have by \eqref{eq:e.unk.pde} that
\begin{align}
\(\d_t + \Ac \)  \ub_{N+1}^\rho
	&= \(\d_t + \Ac \)  \ub_N^\rho + \( \d_t + \Ac \) u_{N+1}^\rho \\
	&= \sum_{i=0}^N \(\Ac - \Acb_i^\rho\) u_{N-i} 
		+ \( \Ac - \Acb_{0}^\rho\) u_{N+1}^\rho - \sum_{i=1}^{N+1} \Ac_i^\rho u_{N-i+1}^\rho \\
	&= \sum_{i=1}^{N+1} \(\Ac - \Acb_{i-1}^\rho\) u_{N-i+1}^\rho 
		+ \( \Ac - \Acb_{0}^\rho\) u_{N+1}^\rho - \sum_{i=1}^{N+1} \Ac_i^\rho u_{N-i+1}^\rho \\
	&=\sum_{i=0}^{N+1} \(\Ac - \Acb_{i}^\rho\) u_{N-i+1}^\rho.
\end{align}
Therefore, \eqref{eq:e.ubsum} holds for all $N$.
\end{proof}

We are now in a postition to prove Theorem \ref{thm:e.acc}.

\begin{proof}[Proof of Theorem \ref{thm:e.acc}]
From \eqref{eq:e.Abar} and \eqref{eq:e.errorpde} we have 
\begin{align}
(u - \bar u_N^\rho)(t,z)
	&= \sum_{k=0}^N \int_t^T \dd s\int_{\Rb^2} \dd \zeta\, \Gam(t,z;s,\zeta)
		\(\Ac  - \sum_{j=0}^k \(\Ac_{j,0} + \rho \Ac_{j-1,1}\)\) 
		\sum_{i=0}^{N-k} \rho^i u_{N-k-i,i}(s,\zeta). \label{eq:e.Aexpand}
\end{align}
Let $T_{k}^{a_\alpha}(z)$ be the $k$-th Taylor polynomial approximation of $a_\alpha(z)$ with the convention that $T_{-1}^{a_\alpha}(z) = 0$. We rewrite \eqref{eq:e.Aexpand} as 
\begin{align}
(u - \bar u_N^\rho)(t,z)
&= \sum_{k=0}^N \sum_{i=0}^{N-k} \rho^{i+1} \int_t^T \dd s\int_{\Rb^2} \dd \zeta\, 
		\( a_{(1,1)} - T^{a_{(1,1)}}_{k-1}\)(\zeta)\Gam(t,z;s,\zeta) D_\zeta^{(1,1)}u_{N-k-i,i}(s,\zeta) \\ & \quad
	  + \sum_{k=0}^N  \sum_{i=0}^{N-k} \sum_{ \alpha \in A_0} \rho^i
		\int_t^T \dd s\int_{\Rb^2} \dd \zeta\, 
		\( a_{\alpha} - T^{a_{\alpha}}_{k}\)(\zeta) \Gam(t,z;s,\zeta) D_\zeta^{\alpha} u_{N-k-i,i}(s,\zeta)\\
	&= \sum_{k=0}^N \sum_{i=0}^{N-k} \rho^{i+1} J_{i,k}^{(1)}
	 + \sum_{k=0}^N \sum_{i=0}^{N-k} \rho^i \( J_{i,k,1}^{(2)} + J_{i,k,2}^{(2)}\), \label{eq:e.sum1}
\end{align}
where  
\begin{align}
&J_{i,k}^{(1)}
	:= \int_t^T \dd s\int_{\Rb^2} \dd \zeta\, 
		\( a_{(1,1)} - T^{a_{(1,1)}}_{k-1}\)(\zeta) \Gam(t,z;s,\zeta) D_\zeta^{(1,1)}u_{N-k-i,i}(s,\zeta), \\
&J_{i,k,1}^{(2)}
	:= \sum_{|\alpha| \leq 1}
		\int_t^T \dd s\int_{\Rb^2} \dd \zeta\, 
		\( a_{\alpha} - T^{a_{\alpha}}_{k}\)(\zeta) \Gam(t,z;s,\zeta) D_\zeta^{\alpha} u_{N-k-i,i}(s,\zeta), \\
&J_{i,k,2}^{(2)}
	:= \sum_{\substack{ |\alpha| = 2 \\ \alpha \neq (1,1)}}
		\int_t^T \dd s\int_{\Rb^2} \dd \zeta\, 
		\( a_{\alpha} - T^{a_{\alpha}}_{k}\)(\zeta) \Gam(t,z;s,\zeta) D_\zeta^{\alpha} u_{N-k-i,i}(s,\zeta).
\end{align}
We first consider $J_{i,k}^{(1)}$. We note that $J_{i,0} = 0$ since $\Ac_{-1,1} = 0$ by convention. For $k \geq 1$, we perform integration by parts to obtain for $|\alpha_1| = |\alpha_2| = 1$,
\begin{align}
J_{i,k}^{(1)} 
	&= -\int_t^T \dd s \int_{\Rb^2} \dd \zeta\, 
		\[D_\zeta^{\alpha_1} \( a_{(1,1)} - T_{k-1}^{a_{(1,1)}}\)(\zeta) \Gam(t,z;s,\zeta)\]
		\[D_\zeta^{\alpha_2} u_{N-k-i,i}(s,\zeta)\]. 
\end{align}
By the product rule and \eqref{eq:e.full.bound.forward}, evaluating at $z = \zb$ gives
\begin{align}
|J_{i,k}^{(1)}|
	&\leq C_1 \int_t^T \dd s \int_{\Rb^2} \dd \zeta\, 
		|z - \zeta|^{k-1} \Gam(t,z;s,\zeta) \left| D_\zeta^{\alpha_2}	u_{N-k-i,i}(s,\zeta)\right| \\ &\quad
	    + C_2 \int_t^T \dd s \int_{\Rb^2} \dd \zeta\, 
	        |z - \zeta|^{k} \left| D_z^{\alpha_1} \Gam(t,z;s,\zeta)\right|\, \left| D_\zeta^{\alpha_2} u_{N-k-i,i}(s,\zeta)\right|.
\end{align}
Applying \eqref{eq:e.full.bound} and \eqref{eq:e.unkbound} gives 
\begin{align}
|J_{i,k}^{(1)}|
	&\leq C_3 \int_t^T \dd s\int_{\Rb^2}\dd \zeta\, \Gamh(t,z;s,\zeta) (s-t)^{\frac{k-1}{2}} (T-s)^{\frac{N+h-k-i-1}{2}}
		\(1+|z - \zeta|^{N-k-i} (T-s)^{\frac{N-k-i}{2}}\) \\
	&\leq C_4 \int_t^T\dd s \[ (s-t)^{\frac{k-1}{2}}(T-s)^{\frac{N+h-k-i-1}{2} } +(s-t)^{\frac{N-i-1}{2}}(T-s)^{\frac{h-1}{2}}\] 
	\int_{\Rb^2}\dd \zeta\, \Gamh_2(t,z;s,\zeta) \\
	&\leq C_5\, (T-t)^{\frac{N+h-i}{2}}.
		&&\text{(by } \eqref{eq:timeint}\text{)}
\end{align}
Similar arguments show
\begin{align}
&|J_{i,k,1}^{(2)}|
	\leq C_6\, (T-t)^{\frac{N+h-i+2}{2}},
&|J_{i,k,2}^{(2)}|
	\leq C_7\, (T-t)^{\frac{N+h-i+1}{2}},
\end{align}
for $0 \leq k \leq N$. When $N = 0$, $J_{i,k}^{(1)} = 0$, so by \eqref{eq:e.sum1} we have
\begin{align}
|(u-\ub_0^\rho)(t,z)|
	\leq C_8\, (T-t)^{\frac{h+1}{2}}.
\end{align}
When $N\geq 1$, by \eqref{eq:e.sum1} we have
\begin{align}
|(u-\bar u_N^\rho)(t,z)|
	&\leq C_9\, \sum_{k=0}^N \sum_{i=0}^{N-k} \left( |\rho|^{i+1} (T-t)^{\frac{N-i+h}{2}} + |\rho|^{i}(T-t)^{\frac{N+h-i+1}{2}} \right)\\
	&\leq C_{10}\, ( (T-t)^{\frac 12} + |\rho|)  \sum_{k=0}^N \sum_{i=0}^{N-k} |\rho|^{i} (T-t)^{\frac{N-i+h}{2}}\\
	&= C_{11}\, ( (T-t)^{\frac 12}+ |\rho|) \sum_{k=0}^N (N-k+1) |\rho|^k (T-t)^{\frac{N-k+h}{2}} \\
	&\leq C_{12}\, ((T-t)^{\frac 12} + |\rho|)  \sum_{k=0}^N  |\rho|^k (T-t)^{\frac{N-k+h}{2}} ,
\end{align}
which proves Theorem \ref{thm:e.acc}.
\end{proof}

%
%

\bibliographystyle{chicago}
\bibliography{barrier-bib}	

\begin{thebibliography}{}

\bibitem[\protect\citeauthoryear{Antonov and Spector}{Antonov and
  Spector}{2012}]{sabr-exact}
Antonov, A. and M.~Spector (2012, march).
\newblock Advanced analytics for the sabr model.
\newblock {\em SSRN\/}.

\bibitem[\protect\citeauthoryear{Bates}{Bates}{1988}]{bates1988}
Bates, D. (1988).
\newblock The crash premium: Option pricing under asymmetric processes, with
  applications to options on deutschemark futures.
\newblock {\em Working Paper, University of Pennsylvania\/}.

\bibitem[\protect\citeauthoryear{Bowie and Carr}{Bowie and
  Carr}{1994}]{bowie-carr-1994}
Bowie, J. and P.~Carr (1994, 8).
\newblock Static simplicity.
\newblock {\em Risk\/}.

\bibitem[\protect\citeauthoryear{Carr, Ellis, and Gupta}{Carr
  et~al.}{1998}]{carr1998static}
Carr, P., K.~Ellis, and V.~Gupta (1998).
\newblock Static hedging of exotic options.
\newblock {\em The Journal of Finance\/}~{\em 53\/}(3), 1165--1190.

\bibitem[\protect\citeauthoryear{Carr and Lee}{Carr and Lee}{2009}]{pcs}
Carr, P. and R.~Lee (2009).
\newblock Put-call symmetry: Extensions and applications.
\newblock {\em Mathematical Finance\/}~{\em 19\/}(4), 523--560.

\bibitem[\protect\citeauthoryear{{Carr} and {Lorig}}{{Carr} and
  {Lorig}}{2015}]{bscpv}
{Carr}, P. and M.~{Lorig} (2015, August).
\newblock {Robust replication of barrier-style claims on price and volatility}.
\newblock {\em ArXiv e-prints\/}.

\bibitem[\protect\citeauthoryear{Carr and Nadtochiy}{Carr and
  Nadtochiy}{2011}]{carr-nadtochiy}
Carr, P. and S.~Nadtochiy (2011).
\newblock Static hedging under time-homogeneous diffusions.
\newblock {\em SIAM Journal on Financial Mathematics\/}~{\em 2\/}(1), 794--838.

\bibitem[\protect\citeauthoryear{Cox}{Cox}{1975}]{CoxCEV}
Cox, J. (1975).
\newblock Notes on option pricing {I}: Constant elasticity of diffusions.
\newblock {\em Unpublished draft, Stanford University\/}.
\newblock A revised version of the paper was published by the Journal of
  Portfolio Management in 1996.

\bibitem[\protect\citeauthoryear{Davydov and Linetsky}{Davydov and
  Linetsky}{2001}]{davydov2001pricing}
Davydov, D. and V.~Linetsky (2001).
\newblock Pricing and hedging path-dependent options under the cev process.
\newblock {\em Management science\/}~{\em 47\/}(7), 949--965.

\bibitem[\protect\citeauthoryear{Faulhaber}{Faulhaber}{2002}]{faulhaber2002analytic}
Faulhaber, O. (2002).
\newblock {\em Analytic methods for pricing double barrier options in the
  presence of stochastic volatility}.
\newblock Ph.\ D. thesis, Technische Universit{\"a}t Kaiserslautern.

\bibitem[\protect\citeauthoryear{Fouque, Jaimungal, and Lorig}{Fouque
  et~al.}{2011}]{lorig-fouque-jaimungal-1}
Fouque, J.-P., S.~Jaimungal, and M.~Lorig (2011).
\newblock Spectral decomposition of option prices in fast mean-reverting
  stochastic volatility models.
\newblock {\em SIAM Journal on Financial Mathematics\/}~{\em 2\/}(1).

\bibitem[\protect\citeauthoryear{Fouque, Papanicolaou, Sircar, and
  Solna}{Fouque et~al.}{2011}]{fpss}
Fouque, J.-P., G.~Papanicolaou, R.~Sircar, and K.~Solna (2011).
\newblock {\em Multiscale stochastic volatility for equity, interest rate, and
  credit derivatives}.
\newblock Cambridge: Cambridge University Press.

\bibitem[\protect\citeauthoryear{Hagan, Kumar, Lesniewski, and Woodward}{Hagan
  et~al.}{2002}]{sabr}
Hagan, P., D.~Kumar, A.~Lesniewski, and D.~Woodward (2002).
\newblock Managing smile risk.
\newblock {\em Wilmott Magazine\/}~{\em 1000}, 84--108.

\bibitem[\protect\citeauthoryear{Heston}{Heston}{1993}]{heston1993}
Heston, S. (1993).
\newblock {A closed-form solution for options with stochastic volatility with
  applications to bond and currency options}.
\newblock {\em Rev. Financ. Stud.\/}~{\em 6\/}(2), 327--343.

\bibitem[\protect\citeauthoryear{Lipton, Gal, and Lasis}{Lipton
  et~al.}{2014}]{lipton2014pricing}
Lipton, A., A.~Gal, and A.~Lasis (2014).
\newblock Pricing of vanilla and first-generation exotic options in the local
  stochastic volatility framework: survey and new results.
\newblock {\em Quantitative Finance\/}~{\em 14\/}(11), 1899--1922.

\bibitem[\protect\citeauthoryear{Lorig}{Lorig}{2014}]{lorig-2}
Lorig, M. (2014).
\newblock Pricing derivatives on multiscale diffusions: An eigenfunction
  expansion approach.
\newblock {\em Mathematical Finance\/}~{\em 24\/}(2), 331--363.

\bibitem[\protect\citeauthoryear{Lorig, Pagliarani, and Pascucci}{Lorig
  et~al.}{2015a}]{lorig-pagliarani-pascucci-4}
Lorig, M., S.~Pagliarani, and A.~Pascucci (2015a).
\newblock Analytical expansions for parabolic equations.
\newblock {\em To appear: {SIAM} Journal on Applied Mathematics\/}.

\bibitem[\protect\citeauthoryear{Lorig, Pagliarani, and Pascucci}{Lorig
  et~al.}{2015b}]{lorig-pagliarani-pascucci-2}
Lorig, M., S.~Pagliarani, and A.~Pascucci (2015b).
\newblock Explicit implied volatilities for multifactor local-stochastic
  volatility models.
\newblock {\em To appear: Mathematical Finance\/}.

\bibitem[\protect\citeauthoryear{Merton}{Merton}{1973}]{merton1973theory}
Merton, R. (1973).
\newblock {Theory of rational option pricing}.
\newblock {\em The Bell Journal of Economics and Management Science\/}~{\em
  4\/}(1), 141--183.

\bibitem[\protect\citeauthoryear{Pagliarani and Pascucci}{Pagliarani and
  Pascucci}{2012}]{pagliarani2011analytical}
Pagliarani, S. and A.~Pascucci (2012).
\newblock Analytical approximation of the transition density in a local
  volatility model.
\newblock {\em Cent. Eur. J. Math.\/}~{\em 10\/}(1), 250--270.

\end{thebibliography}

%
%

\clearpage

\begin{figure}[ht]

	\begin{minipage}[t]{.45\textwidth}
		\includegraphics[width=\textwidth]{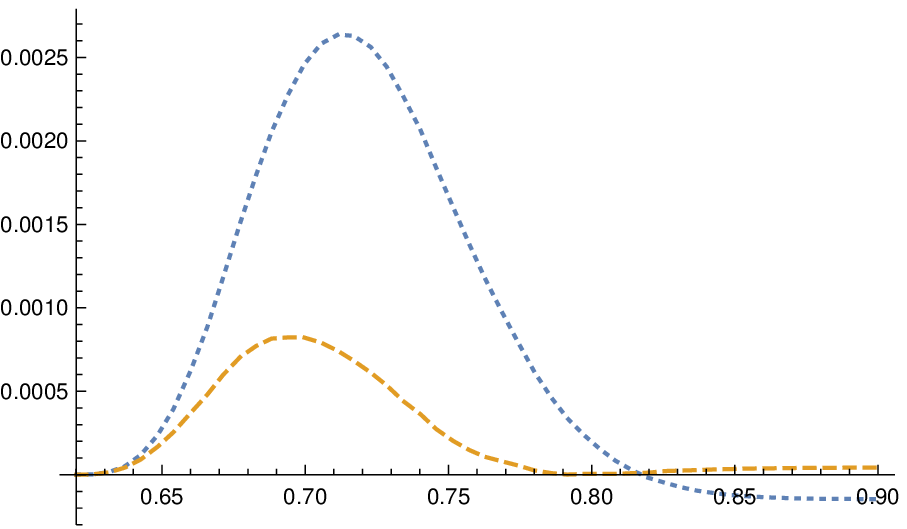}
		\caption{
		  For the Heston model considered in Section \ref{sec:heston}, we plot $u-\bar u_0^\rho$ (blue dotted) and $u-\bar u_2^\rho $ (orange dotted-dashed) as a function of {the upper barrier $U$} for a call option.
		}
		\label{fig:ATMcallerror}
	\end{minipage}
	\hfill
	\begin{minipage}[t]{.45\textwidth}
		\includegraphics[width=\textwidth]{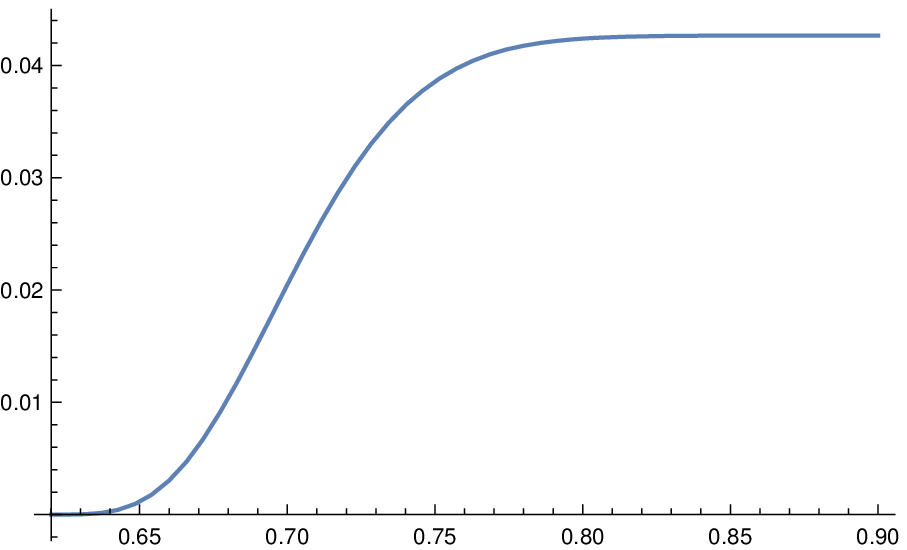}
		\caption{
			For the Heston model considered in Section \ref{sec:heston}, we plot $u$ as a function of {the upper barrier $U$} for a call option.
		}
		\label{fig:ATMcall}
	\end{minipage}

\end{figure}

\begin{figure}[ht]

		\begin{minipage}[t]{.45\textwidth}
		\includegraphics[width=\textwidth]{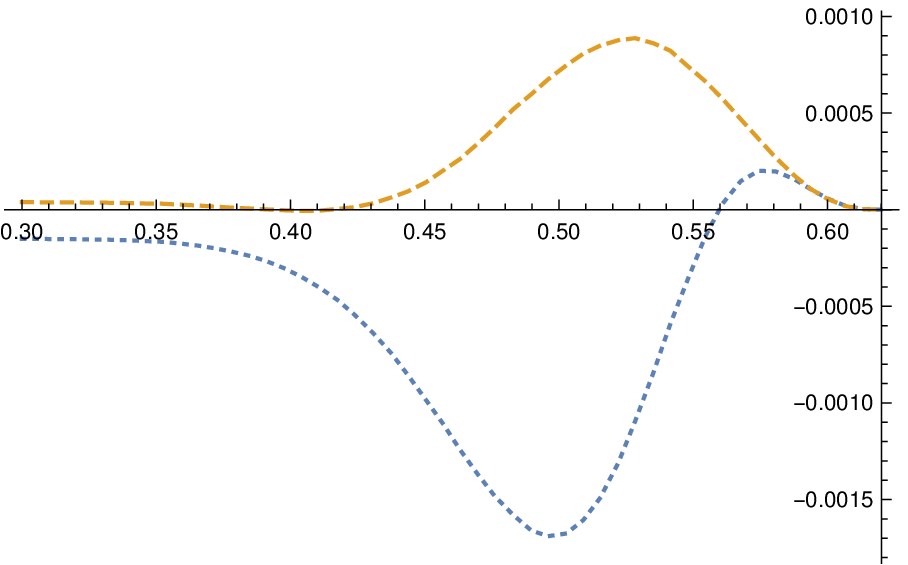}
		\caption{
			For the Heston model considered in Section \ref{sec:heston}, we plot $u - \ub_0^\rho$ (blue dotted) and $u-\ub_2^\rho$ (orange dotted-dashed) as a function of {the lower barrier $L$} for a put option.
		}
		\label{fig:ATMputerror}
	\end{minipage}
	\hfill
	\begin{minipage}[t]{.45\textwidth}
		\includegraphics[width=\textwidth]{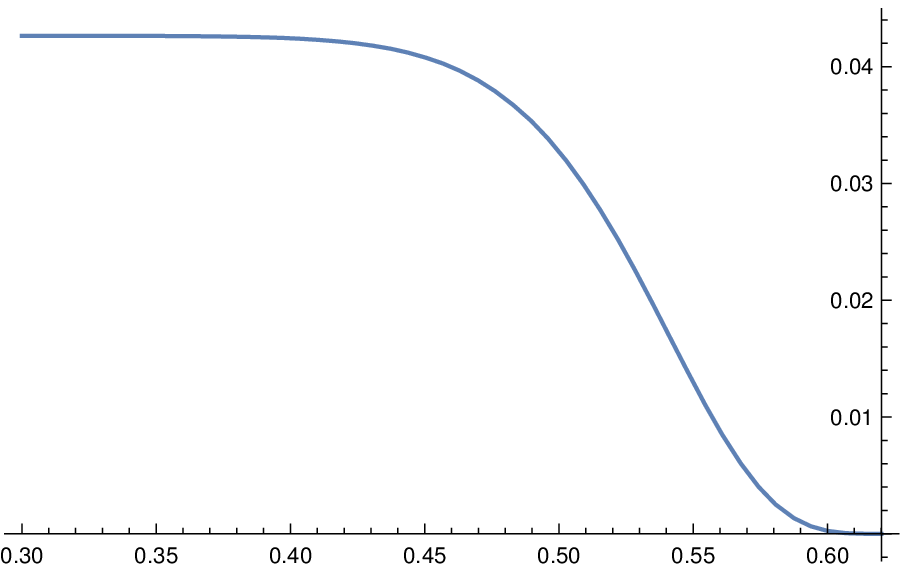}
		\caption{
			For the Heston model considered in Section \ref{sec:heston}, we plot $u$ as a function of {the lower barrier $L$} for a put option.
		}
		\label{fig:ATMput}
	\end{minipage}
 
\end{figure}

\begin{figure}[ht]

	\begin{minipage}[t]{.45\textwidth}
		\includegraphics[width=\textwidth]{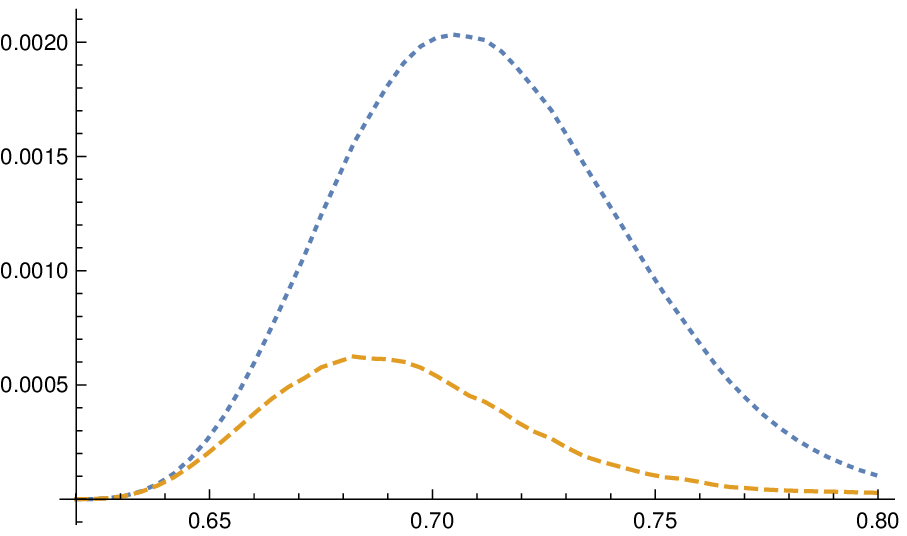}
		\caption{
			For the CEV model considered in Section \ref{sec:cev}, we plot $u-\ub_0$ (blue dotted) and $u-\ub_2 $ (orange dashed) as a function of {the upper barrier $U$} for a call option.
		}
		\label{fig:ATMCEVcallerror}
	\end{minipage}
	\hfill
	\begin{minipage}[t]{.45\textwidth}
		\includegraphics[width=\textwidth]{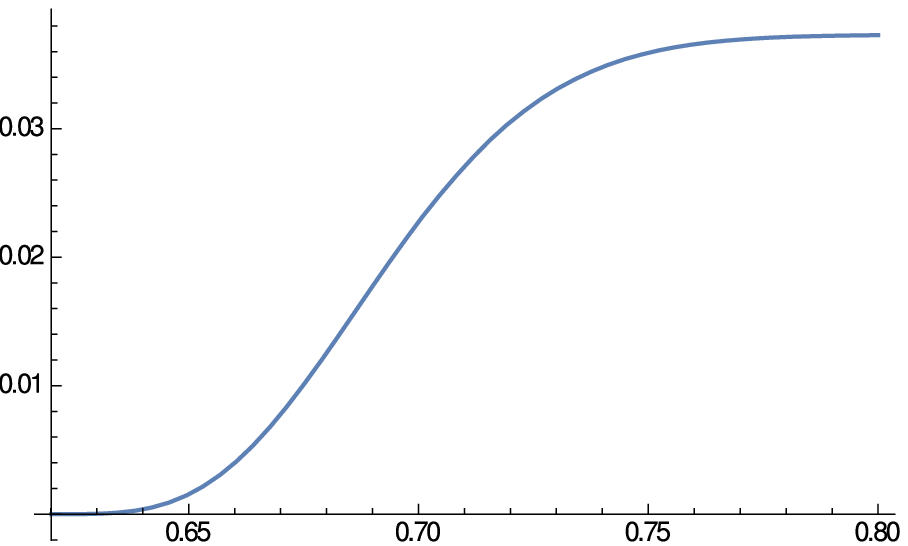}
		\caption{
			For the CEV model considered in Section \ref{sec:cev}, we plot $u$ as a function of {the upper barrier $U$} for a call option.
		}
		\label{fig:ATMCEVcall}
	\end{minipage}
\end{figure}

\begin{figure}[ht]

	\begin{minipage}[t]{.45\textwidth}
		\includegraphics[width=\textwidth]{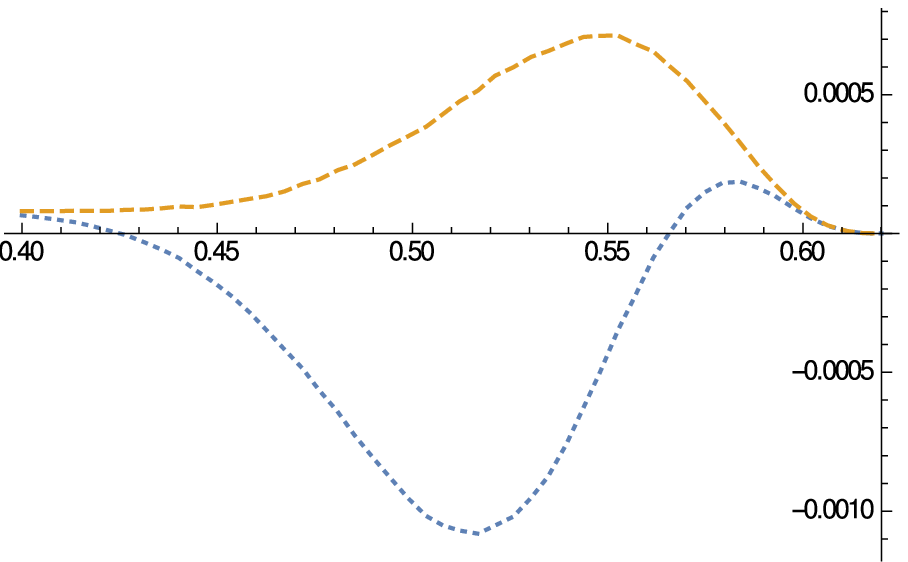}
		\caption{
			For the CEV model considered in Section \ref{sec:cev}, we plot $u-\bar u_0$ (blue dotted) and $u-\bar u_2 $ (orange dashed) as a function of {the lower barrier $L$} for a put option.
		}
		\label{fig:ATMCEVputerror}
	\end{minipage}
	\hfill
	\begin{minipage}[t]{.45\textwidth}
		\includegraphics[width=\textwidth]{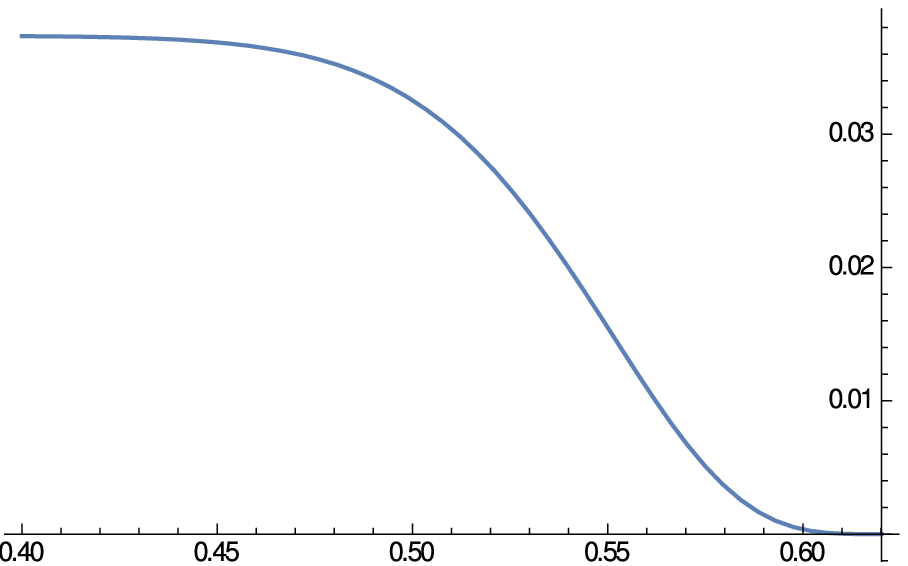}
		\caption{
			For the CEV model considered in Section \ref{sec:cev}, we plot $u$ as a function of {the lower barrier $L$} for a put option.
		}
		\label{fig:ATMCEVput}
	\end{minipage}
\end{figure}

\end{document}